\documentclass{amsart}
    \pagestyle{plain}
\usepackage{amsaddr}
\usepackage{amssymb}
\usepackage{amsmath,amsthm,nicematrix}
\usepackage{amsfonts}
\usepackage{graphicx}
\usepackage[top = 1in, right = 1in, left=1in, bottom = 1in]{geometry}
\usepackage{setspace}
    \onehalfspacing
\usepackage{xcolor}

\usepackage[
backend = biber,
sorting = nyt,
style = numeric
]{biblatex}
\addbibresource{aers.bib}

\usepackage{hyperref}

\newcommand{\C}{\mathcal{C}}
\newcommand{\F}{\mathbb{F}}
\newcommand{\bs}{\boldsymbol}

\DeclareMathOperator{\nRS}{nRS}

\DeclareMathOperator{\gcrd}{gcrd}

\let\mod\relax
\DeclareMathOperator{\mod}{\ mod}

\newtheorem{theorem}{Theorem}[section]
\newtheorem{proposition}[theorem]{Proposition}
\newtheorem{corollary}[theorem]{Corollary}
\newtheorem{lemma}[theorem]{Lemma}

\theoremstyle{definition}
\newtheorem{example}[theorem]{Example}
\newtheorem{definition}[theorem]{Definition}

\theoremstyle{remark}
\newtheorem{remark}{Remark}

\begin{document}
\title{Nonlinear Skew Quasi-Cyclic Codes}
\author{Daniel Bossaller}
\address[Bossaller]{University of Alabama in Huntsville, Huntsville, AL, USA}
\email{daniel.bossaller@uah.edu}
\author{Daniel Herden}
\address[Herden]{Baylor University, Waco, TX, USA}
\email{daniel\_herden@baylor.edu}
\author{Indalecio Ruiz-Bola\~nos}
\email{inda.ruiz@txstate.edu}
\address[Ruiz-Bola\~nos]{Texas State University, San Marcos, TX, USA}

\thanks{The second author was supported by Simons Foundation grant MPS-TSM-00007788.}

\begin{abstract}
    This  article explores nonlinear analogues of skew quasi-cyclic codes of index~$\ell$, i.e., $\F_{q^m}[X;\sigma]$-submodules of $\left(\F_{q^m}[X;\sigma]/(X^n - 1)\right)^\ell$. After introducing nonlinear skew quasi-cyclic codes, we then determine the module structure of these codes by using a two-fold iteration of the Smith normal form of matrices over skew polynomial rings. We show that actually a single use of the Smith normal form will suffice to determine the elementary divisors of the code. Along the way, we also describe duals of our codes with respect to appropriately chosen inner products.
\end{abstract}

\maketitle
\section{Introduction}\label{intro}
This article investigates nonlinear generalizations of Reed-Solomon codes and skew quasi-cyclic codes. The study of error-correcting codes is a centerpiece of modern digital communication and storage systems, ensuring reliable transmission and recovery of data. Among the most significant developments in this field are Reed-Solomon (RS) codes, renowned for their robust error-correcting capabilities. RS codes were introduced by Irving S. Reed and Gustave Solomon in 1960 \cite{rs1960} to address the need for robust error correction in digital communication and storage. Initially hampered by computational complexity, their adoption surged with the development of efficient decoding algorithms like the Berlekamp-Massey algorithm.

This paper introduces and explores nonlinear skew quasi-cyclic codes, a generalization of skew cyclic \cite{Boucher-Geiselmann-Ulmer:2007, Boucher-Ulmer:2009, Gluesing-Luerssen:2021} and skew quasi-cyclic codes \cite{Abualrub-et-al:2010, Abualrub-et-al:2018, Bhaintwal:2012, Gao-Shen-Fu:2016} along the lines of $\F_q$-linear $\F_{q^t}$-codes \cite{Huffman:2010, Huffman:2013, Sharma-Kaur:2017}.

A linear code $\C$ of length $n\ell$ over $\F_{q^m}$ is called a \emph{quasi-cyclic code of index~$\ell$} if  it is invariant under the shift of codewords by $\ell$ positions. These codes can be studied as $\F_{q^m}[X] / (X^n-1)$-modules by considering their polynomial representation, where $\F_{q^m}[X]$ is the ring of polynomials with coefficients in $\F_{q^m}$ with the usual commutative addition and multiplication.

Given an $\F_q$-automorphism $\sigma: \F_{q^m}\rightarrow \F_{q^m}$, the ring of skew polynomials $\F_{q^m}[X;\sigma]$ is the collection of polynomials with the usual addition but multiplication defined by $X\cdot a = \sigma(a)\cdot X$ for all $a\in \F_{q^m}$. The multiplication in $\F_{q^m}[X;\sigma]$ is not commutative. A linear code $\C$ of length $n$ over $\F_{q^m}$ is called \emph{skew cyclic} if $(c_0, c_1, \ldots, c_{n-1})\in \C$ implies that $\left(\sigma(c_{n-1}),\sigma(c_0), \ldots, \sigma(c_{n-2})\right)\in \C$. Skew cyclic codes can be studied as left ideals of the ring $\F_{q^m}[X;\sigma] / (X^n-1)$ when their codewords are regarded as skew polynomials, provided that $m\mid n$.

Our article is organized as follows. Section \ref{sec:prelim} introduces the mathematical structures that are relevant to the present article. Section \ref{sec:nrs} examines \emph{additive Reed-Solomon codes}, as introduced by \cite{Yadav-2024}. This investigation leads naturally to the general concept of \emph{nonlinear skew quasi-cyclic codes}, which is the main subject of this article. Section \ref{sec:polycharacterization} introduces the quotient skew polynomial rings $R_n$ and $P_n$, which are then used to describe $\F_{q^a}$-linear skew quasi-cyclic $\F_{q^m}$-codes as $P_n$-modules.

Finding the algebraic structure of codes is an important problem \cite{Lally-Fitzpatrick:2001, Ling-Sole:2006, Ling-Sole:2001, Ling-Sole:2003, Ling-Sole:2005, Ou-azzou-et-al:2024}. In Section~\ref{sec:structure1} we analyze the module structure of $\F_{q^a}[X;\sigma]$-modules using the Smith normal form at two different instances. Consider two $\F_{q^a}[X;\sigma]$-submodules $C$ and $D$ such that $D\leq C\leq \F_{q^m}[X;\sigma]^\ell$, and obtain a basis $\boldsymbol{c}_1^*, \boldsymbol{c}_2^*, \ldots, \boldsymbol{c}_{\xi}^*$ for $C$ such that $ \left\{  d_i \boldsymbol{c}_i^* : i =1, \ldots, \zeta  \right\} $ with $\zeta \le \xi$ is a basis of~$D$. Then we provide the $\F_{q^a}[X;\sigma]$-module structure of $C\big / D$.

The process of Section~\ref{sec:structure1} utilized two applications of the Smith normal form in order to find the structure constants of $\F_{q^a}$-linear skew quasi-cyclic $\F_{q^m}$-codes $\C$. However, the theory used was much more general than is necessary for the current setting of nonlinear skew quasi-cyclic codes. Section~\ref{sec:totaldivisors} is devoted to the classification of \emph{total divisors} of skew polynomials, culminating in Theorem~\ref{thm:grcdmakesmagic}, which allows us to, in Section~\ref{sec:structure2}, use only one application of the Smith normal form to find a stacked basis $\boldsymbol{q}_1, \boldsymbol{q}_2,\ldots, \boldsymbol{q}_{r\ell}$ of $\F_{q^a}[X;\sigma]^{r\ell}$ and total divisors of $X^n-1$. These are used to obtain the $P_n$-module structure of $\C$ and to find an alternative generating set for $\C$. Additionally, this stacked basis allows us to define a new inner product and dual space for $\C$.

\section{Preliminaries}\label{sec:prelim}
As the present article presents generalizations of cyclic and skew cyclic codes, we will use this section to review some of the relevant background. We will, for the most part, state results without proof; the interested reader may refer to standard textbooks such as \cite{Macwilliams-Sloane:1977} or to the cited articles.

\subsection{Cyclic and Quasi-Cyclic Codes}
In this subsection let us fix a finite field $\F_q$ and some positive integer $n$ such that $\gcd(n,q) = 1$. Then we will fix the following notation $R_n := \F_q[X] / (X^n - 1)$ for the ring of degree $<n$ polynomials with the relation $X^n = 1$.

\begin{definition}
    A code $\C$ over $\F_q$ of length $n$ is called {\it cyclic} if whenever $(c_0, c_1, \ldots, c_{n-1})$ is contained in $\C$, then $(c_{n-1}, c_1, \ldots, c_{n-2}) \in \C$, that is $\C$ is invariant under cyclic shifts. It is straightforward to show that cyclic codes are in one-to-one correspondence with the ideals of $R_n$.
\end{definition}

As $R_n$ is a commutative principal ideal ring, each cyclic code $\C$ contains a unique monic polynomial of smallest degree $g(X)$ (called the \emph{generator polynomial} of $\C$) such that the ideal generated by $g(X)$ is $\C$. It is straightforward (using the division algorithm) to see that $g(X) \mid (X^n - 1)$. Hence, define the polynomial $h(X) : = \frac{X^n -1}{g(X)}$. Note that $h$ is a polynomial of degree $n - \deg(g)$.

\begin{lemma}
    Let $\C$ be a cyclic code with generator polynomial $g(X)$ of degree $n-k$, and suppose that $h(X)$ is defined as above, then:
    \begin{enumerate}
        \item[$(1)$] $\C$ is a linear code of dimension $k$.
        \item[$(2)$] Let $\hat h(X) : = X^{k}h(X^{-1})$ be the \emph{reciprocal polynomial} of $h(X)$. Then $\hat h(X)$ is a generator polynomial for $\C^\perp$ under the standard Euclidean inner product.
    \end{enumerate}
\end{lemma}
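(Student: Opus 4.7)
The plan for (1) is to exhibit an explicit basis of $\C$ over $\F_q$ of cardinality $k$. First I would argue that $\{g(X),\, Xg(X),\, \ldots,\, X^{k-1}g(X)\}$ spans $\C$: every element of $\C$ has the form $f(X)g(X) \bmod (X^n - 1)$ for some $f \in \F_q[X]$, and the division algorithm gives $f = qh + r$ with $\deg r < k$, so that $fg = q(X^n - 1) + rg \equiv rg \pmod{X^n-1}$. Since $\deg(rg) < (n-k)+k = n$, no further reduction occurs, and $rg$ is visibly an $\F_q$-linear combination of the proposed spanning set. Linear independence is immediate because these $k$ polynomials have pairwise distinct degrees $n-k, n-k+1, \ldots, n-1$. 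Hence $\dim_{\F_q} \C = k$.

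For (2), my plan is a dimension-count argument combined with an orthogonality check. I would first observe that $\C^\perp$ is itself cyclic: the cyclic shift $T$ is an isometry of $\F_q^n$ under the Euclidean inner product, so if $w \in \C^\perp$ then for any $c \in \C$ one has $\langle c, Tw \rangle = \langle T^{-1} c, w \rangle = 0$, using that $T^{-1} c \in \C$. Combined with part (1), this gives $\dim \C^\perp = n - k$. Next, $\hat h$ has degree exactly $k$ because $g(X)h(X) = X^n - 1$ has constant term $-1$, so $h(0) \neq 0$ and therefore $\hat h_k = h_0 \neq 0$. Applying part (1) to the cyclic code generated by $\hat h$ then gives it dimension $n-k$, so it suffices to show that this code is contained in $\C^\perp$.

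The key step is to show that every cyclic shift of $\hat h$ is orthogonal to every cyclic shift of $g$. For this I would use the realization of the Euclidean inner product as a coefficient extraction in $R_n$, namely $\langle a, b \rangle = [X^0]\bigl(a(X)\, b(X^{-1})\bigr)$ computed in $R_n$ (with $X^{-1} = X^{n-1}$), together with the identity $\hat h(X^{-1}) = X^{-k} h(X)$. The computation then collapses to
\[
\langle X^i g,\; X^j \hat h \rangle \;=\; [X^0]\bigl( X^{i-j-k}\, g(X)\, h(X) \bigr) \quad \text{in } R_n,
\]
which vanishes because $g(X) h(X) = X^n - 1 \equiv 0$ in $R_n$. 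The main obstacle will be careful bookkeeping around the reciprocal polynomial and the polynomial-product realization of the inner product; once this identification is set up, the identity $gh \equiv 0$ inside $R_n$ does all the work.
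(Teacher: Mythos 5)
The paper states this lemma without proof, deferring to standard textbooks, so there is no in-paper argument to compare against; I will assess your proof on its own.

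Your argument is correct and is the standard one. Part (1) is clean: the division $f = qh + r$ with $\deg r < k$ reduces any $fg$ to $rg$ of degree $< n$, and the degree-distinctness of $g, Xg, \ldots, X^{k-1}g$ gives linear independence. Part (2)'s skeleton --- show $\C^\perp$ is cyclic and of dimension $n-k$, show $\deg \hat h = k$ using $g(0)h(0) = -1$, show $\langle \hat h \rangle \subseteq \C^\perp$ via the realization $\langle a,b\rangle = [X^0]\bigl(a(X)b(X^{-1})\bigr)$ in $R_n$, and then conclude by dimension count --- is sound, and the identity $\hat h(X^{-1}) = X^{-k}h(X)$ does indeed collapse the orthogonality check to $gh \equiv 0$ in $R_n$.

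One small gap worth noticing: you invoke part (1) for the cyclic code generated by $\hat h$ to conclude it has dimension $n-k$, but part (1) only applies to a \emph{generator} polynomial, i.e., one dividing $X^n - 1$. You never verify $\hat h \mid X^n - 1$. This is easy to patch: taking reciprocals of the degree-$n$ identity $X^n - 1 = g(X)h(X)$ gives $-(X^n - 1) = \hat g(X)\hat h(X)$, so $\hat h$ does divide $X^n - 1$. Alternatively, you can avoid part (1) here entirely: the shifts $\hat h, X\hat h, \ldots, X^{n-k-1}\hat h$ have distinct degrees $k, \ldots, n-1$ and hence are independent in $R_n$, giving $\dim \langle \hat h\rangle \ge n-k$; combined with $\langle \hat h\rangle \subseteq \C^\perp$ and $\dim \C^\perp = n-k$, equality and the divisibility both follow. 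Either patch closes the argument. A very minor further remark: with the paper's definition of generator polynomial as the unique \emph{monic} polynomial of minimal degree, $\hat h$ has leading coefficient $h(0) = -g(0)^{-1}$, so strictly speaking the generator polynomial is $h(0)^{-1}\hat h$; but this is a normalization quibble present already in the lemma's statement, not a defect of your proof.
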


\begin{definition}
    \emph{Quasi-cyclic codes} are generalizations of cyclic codes which only repeat after cyclically shifting codewords some $\ell \geq 1$ positions. In other words, for a word of length $n = m \cdot \ell$, if
    \[(c_0, c_1, \ldots, c_{\ell-1}, c_\ell, \ldots, c_{m \ell-1}) \in \C,\] then
    \[(c_{(m-1) \ell}, \ldots, c_{m \ell-1}, c_0, \ldots, c_{(m-1) \ell - 1}) \in \C.\]
\end{definition}
The reader is encouraged to consult \cite{Ling-Sole:2006, Ling-Sole:2001,Ling-Sole:2003, Ling-Sole:2005} for a wide-ranging overview of quasi-cyclic codes.

The invariance under $\ell$-shifts of the codewords permits a simpler notation of a word in a quasi-cyclic code as a matrix, as follows:
\[\begin{pmatrix}
        c_0           & c_1               & \cdots & c_{\ell-1}     \\
        c_{\ell}      & c_{\ell+1}        & \cdots & c_{2\ell - 1}  \\
        \vdots        & \vdots            & \ddots & \vdots         \\
        c_{(m-1)\ell} & c_{(m-1)\ell + 1} & \cdots & c_{m \ell - 1}
    \end{pmatrix}\]
When the codewords are written in this way, it is evident that the code is invariant under cyclic ``row-shifts.'' This notation also suggests that every quasi-cyclic code is composed of cyclic codes, as is borne out in the following result.

\begin{lemma}
    Let $R_m$ denote $\F_{q}[X]/(X^m -1)$. Then the quasi-cyclic codes of index $\ell$ and length $m \cdot \ell$ are in one-to-one correspondence with the $R_m$-submodules of the free $R_m$-module of rank $\ell$, $R_m^\ell$.
\end{lemma}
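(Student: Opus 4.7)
The plan is to construct an explicit $\F_q$-linear bijection between $\F_q^{m\ell}$ and $R_m^\ell$ that carries the $\ell$-shift on the ambient space to scalar multiplication by $X$, and then observe that quasi-cyclic codes of index $\ell$ correspond precisely to those $\F_q$-subspaces of $R_m^\ell$ that are stable under multiplication by $X$, which by linearity are exactly the $R_m$-submodules.

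Concretely, I would define the map
\[
\phi:\F_q^{m\ell}\longrightarrow R_m^\ell,\qquad (c_0,c_1,\ldots,c_{m\ell-1})\longmapsto (p_0(X),p_1(X),\ldots,p_{\ell-1}(X)),
\]
where $p_j(X)=\sum_{i=0}^{m-1}c_{i\ell+j}X^i$; that is, one reads off the $j$-th column of the $m\times\ell$ matrix arrangement shown in the excerpt and interprets it as an element of $R_m$. Because every element of $R_m$ has a unique representative of degree less than $m$, the map $\phi$ is an $\F_q$-vector space isomorphism with inverse given by reading coefficients back into the matrix. Thus $\F_q$-linear subspaces of $\F_q^{m\ell}$ correspond bijectively under $\phi$ to $\F_q$-subspaces of $R_m^\ell$.

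The core step is to verify that the $\ell$-shift on $\F_q^{m\ell}$ corresponds under $\phi$ to multiplication by $X$ on $R_m^\ell$. If $(c_0,\ldots,c_{m\ell-1})$ is shifted to $(c_{(m-1)\ell},\ldots,c_{m\ell-1},c_0,\ldots,c_{(m-1)\ell-1})$, then the $j$-th column of the new matrix is obtained from the $j$-th column of the old matrix by a single cyclic downshift. Under the polynomial encoding, this downshift sends $p_j(X)$ to $Xp_j(X)\bmod (X^m-1)$, which is exactly multiplication by $X$ in $R_m$. Hence an $\F_q$-subspace $\C\subseteq\F_q^{m\ell}$ is closed under $\ell$-shifts if and only if $\phi(\C)$ is closed under multiplication by $X$.

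Finally, since $R_m$ is generated as an $\F_q$-algebra by the element $X$, an $\F_q$-subspace of $R_m^\ell$ is closed under multiplication by $X$ if and only if it is closed under multiplication by every element of $R_m$, that is, if and only if it is an $R_m$-submodule of $R_m^\ell$. Combining this with the previous paragraph yields the claimed bijection. The only subtle point, which I would mention in passing, is that the minimality of $\ell$ in the definition is not disturbed by the correspondence: one simply keeps track of quasi-cyclic codes whose stabilizing shift divides $\ell$ and notes that the index $\ell$ itself is read off from the smallest $r$ such that the submodule $\phi(\C)$ already arises from an $r\mid\ell$ decomposition.
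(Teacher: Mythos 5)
Your proposal is correct and follows the standard column-reading polynomial encoding; the paper itself states this lemma without proof in the preliminaries, but your argument is the same one the paper spells out for the skew generalization in Theorem~\ref{thm:skewQCiffPnsubmodule}, where the map $\phi$ sends column $j$ to $c_j(X)=\sum_i c_{i,j}X^i$ and the row-shift (composed there with $\sigma$) is shown to correspond to left multiplication by $X$ modulo $X^n-1$. The only place to be a bit careful is your closing remark on minimality of $\ell$: as stated, the bijection is really between $\ell$-shift-invariant linear codes and $R_m$-submodules of $R_m^\ell$, and the "index exactly $\ell$" condition must be read off separately; you acknowledge this, and the paper glosses over the same point, so this is not a gap in practice.
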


\subsection{Skew Polynomials and Skew (Quasi-)Cyclic Codes}
Skew polynomials are generalizations of polynomials first explored by Ore in \cite{Ore:1933}; in that article, the author showed that skew polynomials (to be formally defined below) are the most general polynomial-type ring for which the following two familiar properties of polynomials still hold:
\begin{enumerate}
    \item[$(1)$] $\deg(f(X) + g(X)) \leq \max\{\deg(f(X)), \deg(g(X))\}$,
    \item[$(2)$] $\deg(f(X)g(X)) = \deg(f(X)) + \deg(g(X))$.
\end{enumerate}
These two properties together assure that there must exist some type of division algorithm and Euclidean algorithm.

\begin{definition}
    Let $R$ be a (possibly noncommutative) ring and $\sigma: R \to R$ an endomorphism of $R$. Then the \emph{skew polynomial ring} $R[X;\sigma]$ is the (left) $R$-algebra of formal polynomials
    \[f(X) = \sum_{i = 0}^n a_i X^i\] with $a_i \in R$ with the addition defined component-wise and multiplication defined by the equation
    \[Xa = \sigma(a)X.\]
    In the present article we assume that $R = \F_{q^m}$ is a finite field for $q$ a prime power, and that $\sigma$ is an $\F_{q}$-linear field automorphism.
\end{definition}

Skew polynomial rings have been extensively studied by many authors, particularly, Lam and Leroy in \cite{Lam-Leroy:1988}. The study of coding-theoretic applications of skew polynomials was initiated in 2009 by Boucher and Ulmer, \cite{Boucher-Ulmer:2009}, who examined skew polynomials as the basis for skew cyclic codes. Skew polynomials retain many familiar properties of traditional polynomials in a noncommutative setting. In the next proposition we list some of these properties. The reader is encouraged to consult \cite{Ore:1933} for definitions and proofs of the following facts.
\begin{proposition}
    Let $R$ denote the skew polynomial ring $\F_{q^m}[X;\sigma]$, and say that $f(X)$ and $g(X)$ are skew polynomials with $g(X)$ nonzero.
    Then the following holds.
    \begin{enumerate}
        \item[$(1)$] $R$ is a right Euclidean domain.

        \item[$(2)$] Given any two skew polynomials $f(X)$ and $g(X)$ in $R$, there is a unique monic \emph{greatest common right divisor} $\gcrd(f(X), g(X)) := d(X) \in R$. Moreover, there are polynomials $a(X)$ and $b(X)$ so that
              \[d(X) = a(X) f(X) + b(X) g(X).\]
    \end{enumerate}
\end{proposition}

\begin{remark}
    In the above proposition, similar statements hold if we switch `left' for `right,' and vice-versa. For the purposes of this article, we will choose to work with $R$ as a right Euclidean domain as in the above proposition; however, the arguments of this paper also work mutatis mutandis for left Euclidean domains.
\end{remark}

Finally we note the following property of skew polynomials which will be useful in the exploration of skew cyclic and skew quasi-cyclic codes.

\begin{lemma}
    Suppose that $n > 0$ with $m \mid n$, and that $\sigma$ is an $\F_q$-auto\-mor\-phism of $\F_{q^m}$. Then the polynomial ${X^n - 1}$ commutes with every skew polynomial. As a result $R_n := \F_{q^m}[X; \sigma]/(X^n - 1)$ is a ring.
\end{lemma}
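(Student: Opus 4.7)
The plan is to show that $X^n$ is central in $\F_{q^m}[X;\sigma]$, and then the result for $X^n - 1$ follows immediately (since the constant $1$ lies in the base field and hence commutes with everything), which in turn makes $(X^n - 1)$ a two-sided ideal so that the quotient inherits a ring structure.

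First I would record the order of $\sigma$. Since $\sigma$ is assumed to be an $\F_q$-linear field automorphism of $\F_{q^m}$, it belongs to the Galois group $\mathrm{Gal}(\F_{q^m}/\F_q)$, which is cyclic of order $m$ generated by the Frobenius $\phi : x \mapsto x^q$. Hence $\sigma = \phi^k$ for some $k$, and in particular $\sigma^m = \mathrm{id}$. Under the hypothesis $m \mid n$, this gives $\sigma^n = \mathrm{id}$.

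Next I would verify that $X^n$ commutes with every generator of $\F_{q^m}[X;\sigma]$ as an $\F_q$-algebra. Iterating the defining relation $X a = \sigma(a) X$ yields $X^n a = \sigma^n(a) X^n = a X^n$ for every $a \in \F_{q^m}$, using $\sigma^n = \mathrm{id}$. Commutation of $X^n$ with $X$ itself is immediate from associativity. By $\F_q$-linearity, $X^n$ then commutes with every element of $\F_{q^m}[X;\sigma]$, and so does $X^n - 1$.

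Finally, the centrality of $X^n - 1$ means that the left ideal $\F_{q^m}[X;\sigma] \cdot (X^n - 1)$ equals the right ideal $(X^n - 1) \cdot \F_{q^m}[X;\sigma]$, so it is a two-sided ideal. Therefore the quotient $R_n = \F_{q^m}[X;\sigma]/(X^n - 1)$ carries a well-defined ring structure, as claimed. The only non-routine observation in the whole argument is identifying $\sigma$ as a power of Frobenius so that $\sigma^m = \mathrm{id}$; everything after that is a short computation.
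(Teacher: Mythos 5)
Your proof is correct and is the standard argument one would supply for this lemma; the paper itself states the lemma without proof in its preliminaries section. The key observations — that an $\F_q$-linear field automorphism of $\F_{q^m}$ lies in $\mathrm{Gal}(\F_{q^m}/\F_q)$ and hence satisfies $\sigma^m = \mathrm{id}$, so that $m \mid n$ forces $\sigma^n = \mathrm{id}$ and $X^n a = \sigma^n(a)X^n = aX^n$ — are exactly what is needed, and the passage from centrality of $X^n - 1$ to the two-sidedness of the ideal and the ring structure on the quotient is routine.
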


\begin{definition}
    Let $\F_{q^m}$ be a finite field with $q$ a prime power, let $\sigma$ an $\F_q$-automorphism of $\F_{q^m}$ and $m\mid n$. A \emph{$\sigma$-skew cyclic code} $\C$ of length $n$ is then a left ideal of the ring $R_n = \F_{q^m}[X, \sigma] / (X^n - 1)$. We will usually write \emph{skew cyclic code} (instead of $\sigma$-skew cyclic code) whenever $\sigma$ is clear from the context.
\end{definition}

\begin{remark}
    Because skew polynomial rings admit a (one-sided) division algorithm, we may define generalizations of skew cyclic codes (and skew quasi-cyclic codes) also for the case $m\nmid n$; however, we must navigate the submodules rather than the ideals of the left $\F_{q^m}[X;\sigma]$-module \[R_n = \F_{q^m}[X;\sigma]/(X^n - 1),\] where $(X^n - 1)$ denotes the left $\F_{q^m}[X; \sigma]$-module generated by the skew polynomial $X^n - 1$.
\end{remark}

Using the relation $X^n = 1$ that is given by the quotient ring in the preceding definition, it is easy to see that if $(c_0, c_1, \ldots, c_{n-2}, c_{n-1})$ is a codeword in the skew cyclic code, then the skew-cyclic shift, $$\left(\sigma(c_{n-1}), \sigma(c_0), \sigma(c_1), \ldots, \sigma(c_{n-2})\right),$$ must also be a codeword.
There are many aspects of skew cyclic codes that we leave unaddressed in this article; the interested reader may see Chapter~8 of the {\em Concise Encyclopedia of Coding Theory}, \cite{Gluesing-Luerssen:2021}.

A natural generalization of skew cyclic codes are skew quasi-cyclic codes, which were introduced by Abualrub, Ghrayeb, Aydin, and Siap in \cite{Abualrub-et-al:2010}.

\begin{definition}
    Let $\F_{q^m}$ be a finite field, $\sigma$ an $\F_q$-automorphism of $\F_{q^m}$, and $m\mid n$. A subset $\C$ of $\F_{q^m}^{n\ell}$ is called a \emph{skew quasi-cyclic code} of length $n\ell$ and index $\ell$ if
    \begin{enumerate}
        \item[$(1)$] $\C$ is an $\F_{q^m}$-subspace of $\F_{q^m}^{n\ell}$, and
        \item[$(2)$] if $\bs{c} = (a_{0,0}, a_{0,1}, \ldots, a_{0,\ell-1}, a_{1,0}, \ldots, a_{1,\ell-1}, \ldots, a_{n-1,\ell-1})\in \C$, then the word
              \[T_{\ell,\sigma}(\bs{c})\! =\! (\sigma(a_{n-1,0}), \ldots,\! \sigma(a_{n-1,\ell-1}),\! \sigma(a_{0,0}), \ldots,\! \sigma(a_{0,\ell-1}), \ldots,\! \sigma(a_{n-2,\ell-1})) \]
              belongs to $\C$.
    \end{enumerate}
\end{definition}
Property (2) above is more simply stated as that the codewords of a skew quasi-cyclic code are closed under a ``skew-shift'' of index $\ell$. The authors assume $m \mid n$ so that $R_n = \F_{q^m}[X; \sigma]/(X^n - 1)$ is a ring, in which case we have the following characterization.

\begin{proposition}(\cite{Abualrub-et-al:2010}, Theorem 5)
    Say that $R_n = \F_{q^m}[X; \sigma]/(X^n - 1)$, and the parameters $n$ and $\ell$ are as above. Then the skew quasi-cyclic codes are in one-to-one correspondence with the $R_n$-submodules of $R_n^\ell$.
\end{proposition}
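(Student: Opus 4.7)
The plan is to exhibit an explicit $\F_{q^m}$-linear bijection
\[
\varphi : \F_{q^m}^n \longrightarrow R_s^\ell
\]
under which the skew-shift $T_{\ell,\sigma}$ corresponds to left multiplication by $X$ in each coordinate, and then observe that the two notions of ``closed under the relevant action'' agree on the two sides.

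First I would define $\varphi$ on the codeword $\bs c = (a_{i,j})_{0 \le i \le s-1,\, 0 \le j \le \ell-1}$ (written in the lexicographic order of the definition) by
\[
\varphi(\bs c) := \bigl( c_0(X), c_1(X), \ldots, c_{\ell-1}(X) \bigr), \qquad c_j(X) := \sum_{i=0}^{s-1} a_{i,j}\, X^i \in R_s.
\]
Because $m \mid s$, the previous lemma guarantees that $R_s$ is a ring; in particular $\F_{q^m}$ embeds as the constants, $\dim_{\F_{q^m}} R_s = s$, and hence $\varphi$ is a well-defined $\F_{q^m}$-linear isomorphism of vector spaces (both sides have dimension $n = s\ell$ over $\F_{q^m}$). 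Next I would verify the key identity $\varphi \circ T_{\ell,\sigma} = X \cdot \varphi$ by a direct computation: using the defining relation $Xa = \sigma(a)X$ and reducing modulo $X^s - 1$, for each $j$ one obtains
\[
X \cdot c_j(X) \;=\; \sum_{i=0}^{s-1} \sigma(a_{i,j}) X^{i+1} \;\equiv\; \sigma(a_{s-1,j}) + \sum_{i=1}^{s-1} \sigma(a_{i-1,j})\, X^i \pmod{X^s - 1},
\]
which by inspection of the definition of $T_{\ell,\sigma}$ is exactly the $j$-th coordinate of $\varphi\bigl(T_{\ell,\sigma}(\bs c)\bigr)$.

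Finally I would close the loop. If $\C$ is a skew quasi-cyclic code, then by condition (1) the image $\varphi(\C)$ is an $\F_{q^m}$-subspace of $R_s^\ell$, and by condition (2) combined with the identity above it is closed under left multiplication by $X$. Since $R_s$ is generated as a ring by $\F_{q^m}$ together with $X$, closure under scalar multiplication and under $X$-multiplication upgrades automatically to closure under the full left $R_s$-action, so $\varphi(\C)$ is a left $R_s$-submodule of $R_s^\ell$. Conversely, any $R_s$-submodule of $R_s^\ell$ pulls back through $\varphi^{-1}$ to an $\F_{q^m}$-subspace invariant under $T_{\ell,\sigma}$, i.e., a skew quasi-cyclic code. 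The two assignments are mutually inverse because $\varphi$ is a bijection. The only real obstacle is keeping the indexing conventions for $T_{\ell,\sigma}$ and for $\varphi$ aligned; once the commutation computation is carried out with care, everything else is structural and requires no further input.
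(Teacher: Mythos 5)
Your argument is correct and, modulo the change of parameters, it is precisely the argument the paper gives for its generalization, Theorem~\ref{thm:skewQCiffPnsubmodule} (the case $a=m$, $\phi=\varphi$, $P_n=R_s$): define the coordinatewise polynomial map, compute that left multiplication by $X$ realizes the $\sigma$-shift modulo $X^s-1$, and then use $\F_{q^m}$-linearity together with the fact that $\F_{q^m}$ and $X$ generate $R_s$ to upgrade invariance under $X$ to a full left $R_s$-module structure. Note, though, that the paper states this particular proposition without proof, as a citation to Abualrub et al.; your write-up is a correct blind reconstruction of that cited proof, and it agrees with the proof strategy the paper does spell out for the more general $\F_{q^a}$-linear version.
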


\subsection{\texorpdfstring{$\F_{q}$-Linear $\F_{q^m}$-Codes}{Fq-Linear Fqm Codes}}
Finally, the present article is concerned with generalizations of the above families of codes by considering subsets of $\F_{q^m}^n$ which fail to be linear subspaces over $\F_{q^m}$, but which are still linear over a subfield $\F_q$ of $\F_{q^m}$. Such codes were introduced and studied by Huffman \cite{Huffman:2010, Huffman:2013} and further studied by \cite{Sharma-Kaur:2017}. In this article, when the alphabet, $\F_{q^m}$, and scalar field $\F_q$ are understood, we will default to referring to these codes as \emph{nonlinear codes}.

Nonlinear codes are similar to, but distinct from, \emph{additive codes} which were first studied by Delsarte in \cite{delsarte1971}. Specifically, an additive code $\mathcal A$ over a field $\F_{q^m}$ is a subset of $\F_{q^m}^n$ which forms an abelian group under vector addition; see \cite{Bhunia2025} for a modern survey of the theory of additive codes.

\begin{remark}
    Certainly, any nonlinear code is additive, and, given a prime $p$, any additive code over $\F_{p^m}$ is an $\F_{p}$-linear $\F_{p^m}$-code.
    Thus, one may rather refer to nonlinear codes as \emph{sublinear codes} to accurately reflect their positioning between additive and linear codes.
\end{remark}

\section{Cyclic Structure of Additive Reed-Solomon Codes}\label{sec:nrs}
The genesis of the nonlinear skew quasi-cyclic codes investigated in the present article came from a study of nonlinear analogues of Reed-Solomon codes. Recently, Yadav and Sharma \cite{Yadav-2024} introduced these codes under the moniker of \emph{additive Reed-Solomon codes}, and we will quickly recall some relevant results from their article.

\begin{definition}
    Let $\F_{q^m}$ be a field extension of $\F_q$, $\boldsymbol{\alpha}=\left(\alpha_1, \alpha_2, \ldots, \alpha_n\right) \in \F_{q^m}^n$ a vector of distinct evaluation points $\alpha_i$, and let $k>0$. The \emph{nonlinear Reed-Solomon code}, $\nRS_q(\boldsymbol{\alpha},k)$, of length $n$ and parameter $k$ is given by
    \[\{\left(f(\alpha_1), f(\alpha_2), \ldots, f(\alpha_n) \right) : f(X) \in \F_q[X] \text{ with } \deg(f) < k\}.\]
\end{definition}

\begin{remark}
    Given $f\in \F_q[X]$ with $\deg(f) < k$ and $\boldsymbol{\alpha}$ an evaluation vector, consider the minimal polynomial $\min(\alpha_i; \F_q)$ of each $\alpha_i$ over $\F_q$. Perform long division of $f(X)$ by $\min(\alpha_i; \F_q)$ and evaluate the residue at $\alpha_i$. The resulting codeword equals the one obtained in the preceding definition. This strategy of defining an evaluation code is used in \cite{prcyu}.
\end{remark}

Note that the generator matrix for $\nRS_q(\bs \alpha, k)$ is identical to the generator matrix of the conventional linear Reed-Solomon code $\operatorname{RS}_q(\bs \alpha, k)$. However, we only consider the $\F_q$-linear combinations of the rows, which are vectors of length $n$ in $\F_{q^m}$.

\begin{remark}
    As we assume distinct values $\alpha_i$, the length $n$ of the code $\nRS_q(\boldsymbol{\alpha},k)$ is bounded, $n\leq q^m$. The parameter $k$ will coincide with the $\F_q$-dimension of the code $\nRS_q(\boldsymbol{\alpha},k)$ only for sufficiently small values of $k$, see \cite[Section 4]{Yadav-2024} for a detailed discussion and some upper bounds on $k$. Moreover, the code $\nRS_q(\bs \alpha, k)$ satisfies a weak version of the Singleton bound:
    \[
        d \leq n - \left\lceil \frac{k}{m} \right\rceil + 1.
    \]
    The $\F_q$-linear $\F_{q^m}$-codes which achieve this bound have variously been called \emph{MDS}, \emph{fractional MDS}, or \emph{quasi MDS} in the literature \cite{Ball:2023, Ball:2025, Huffman:2013, Martinez:2024}. See \cite[Section 4]{Yadav-2024} for a proof of the Singleton bound and for necessary and sufficient criteria for $\nRS_q(\bs \alpha, k)$ to be MDS.
\end{remark}

The authors of the present article noted that certain Reed-Solomon codes, notably those whose evaluation vector $\boldsymbol{\alpha}$ consists of the nonzero values of $\F_q$, which we will denote by $\F_q^*$, are BCH codes and optimal with respect to the Singleton bound. These BCH codes are commonly dubbed the ``narrow-sense Reed-Solomon codes'' with designed distance $\delta$, see \cite{Macwilliams-Sloane:1977} for a complete overview of this connection. A more succinct summary of this connection is: ``Full-length Reed-Solomon codes (i.e., $n = q-1$) are cyclic.'' Then there is a natural question for additive Reed-Solomon codes which motivated the present work:

\begin{center}
    \vspace{2pt}
    \boxed{\begin{minipage}{0.67\textwidth}\vspace{2pt}
            Are ``full-length" additive Reed-Solomon codes ``cyclic" in some sense?
            \vspace{2pt}
        \end{minipage}}
    \vspace{2pt}
\end{center}

Consider the problem of factoring a polynomial $f(X) \in \F_q[X]$ over an extension field $\F_{q^m}$. For proofs of the results mentioned in the following discussion the reader is encouraged to consult \cite{1997-lidl-nied}. Let $\sigma: \F_{q^m} \to \F_{q^m}$ be the Frobenius automorphism defined by $\sigma(X) = X^q$ which fixes $\F_q$. Given some $a \in \F_{q^m}$ such that $f(a) = 0$, it is simple to see that the collection $\{a, \sigma(a), \sigma^2(a), \ldots\}$ consists of roots of $f(X)$. Hence we will dub this (necessarily finite) collection the set of \emph{Galois conjugates} of $a$. If $a\ne 0$ and $\zeta$ is a primitive element of $\F_{q^m}$, write $a = \zeta^s$, and then the above family of roots may be written as $\{\zeta^s, \zeta^{qs}, \zeta^{q^2s}, \ldots\}$. The collection of the exponents is called the \emph{cyclotomic coset} of $s$, mod $q^m - 1$, typically denoted
\[C_s = \{sq^i \mod (q^m - 1) : i \geq 0\}.\]

It can be seen that the polynomial $g(X)$ defined by the product
\[g(X) := \prod_{i \in C_s} (X-\zeta^i)\] is the minimal polynomial $\min(a; \F_q)$ of $a = \zeta^s$ whose coefficients come from $\F_q$. More germane to evaluation codes, we have the following lemma whose proof is obvious.
\begin{lemma}
    Let $f(X) \in \F_q[X]$, and suppose that $\sigma$ is the Frobenius automorphism of $\F_{q^m}$ which fixes the subfield $\F_q$. Then for any $a \in \F_{q^m}$,
    \[\sigma(f(a)) = f(\sigma(a)).\]
\end{lemma}

If the evaluation set for the additive Reed-Solomon code is $\boldsymbol{\alpha} = (\zeta^i : i \in C_s) =(\zeta^s, \zeta^{qs}, \zeta^{q^2s}, \ldots)$ for some cyclotomic coset $C_s$ of size $|C_s| = n_s$, it can be seen that the individual codewords in $\nRS_{q}(\boldsymbol{\alpha}, k)$ are fixed under skew cyclic shifts; in other words, $(c_1, c_2, \ldots, c_{n_s}) = (\sigma(c_{n_s}), \sigma(c_1), \ldots, \sigma(c_{n_s -1}))$ for every $(c_1, c_2, \ldots, c_{n_s}) \in \nRS_q(\boldsymbol{\alpha}, k)$.

This insight motivates the following definition of codes which are invariant under skew shifts.
\begin{definition}\label{def:skewversions}
    Let $\F_q\leq \F_{q^a} \leq \F_{q^m}$ be field extensions and $\sigma: \F_{q^m}\rightarrow \F_{q^m}$ an $\F_q$-automorphism. An \emph{$\F_{q^a}$-linear skew cyclic $\F_{q^m}$-code} is an $\F_{q^a}$-subspace, $\C$, of $\F_{q^m}^n$ with the property that
    \begin{equation*}
        (a_0,\ldots,a_{n-1})\in \C \mbox{ implies } (\sigma(a_{n-1}),\sigma(a_0),\ldots,\sigma(a_{n-2}))\in \C.
    \end{equation*}

    A code $\C$ of length $n\ell$ with $m\mid n$ is called an \emph{$\F_{q^a}$-linear skew quasi-cyclic $\F_{q^m}$-code} of length $n\ell$ and index $\ell$ (for short, $(\sigma,\ell)$-QC code) if it is an $\F_{q^a}$-subspace of $\F_{q^m}^{n\ell}$, and it is invariant under a shift $s$ of codewords by $\ell$ positions followed by an application of $\sigma$ coordinatewise. In other words,
    \[\sigma(s(\boldsymbol c))\! =\!\!
        \begin{pmatrix}
            \sigma(c_{(n-1)\ell}) & \cdots & \sigma(c_{n\ell-1})     \\
            \sigma(c_{0})         & \cdots & \sigma(c_{\ell-1})      \\
            \vdots                & \ddots & \vdots                  \\
            \sigma(c_{(n-2)\ell}) & \cdots & \sigma(c_{(n-1)\ell-1})
        \end{pmatrix}\! \in\! \C\! \mbox{ for all}\!
        \begin{pmatrix}
            c_{0}         & \cdots & c_{\ell-1}  \\
            c_{\ell}      & \cdots & c_{2\ell-1} \\
            \vdots        & \ddots & \vdots      \\
            c_{(n-1)\ell} & \cdots & c_{n\ell-1}
        \end{pmatrix}\! \in \C. \]
\end{definition}

\begin{remark}
    These $\F_{q^a}$-linear skew quasi-cyclic $\F_{q^m}$-codes are related to codes studied extensively in the literature. For example, if $a=m$ and  $\sigma=\operatorname{id}$ we obtain quasi-cyclic codes, studied in \cite{Ling-Sole:2001}. For $a=m$ and $\sigma\ne \operatorname{id}$ we obtain linear skew quasi-cyclic codes, that were introduced in \cite{Abualrub-et-al:2010}.
\end{remark}

The above discussion gives an answer for our motivating question.

\begin{theorem}\label{thm-nrshasorbits}
    Suppose that we have an evaluation vector $\boldsymbol{\alpha} = (\alpha_1, \alpha_2, \ldots, \alpha_{m\ell}) \in \F_{q^m}^{m \ell}$ constructed by concatenating $\ell$ sets of Galois conjugates each of size m. Then the nonlinear Reed-Solomon code
    \[\nRS_q(\boldsymbol{\alpha},k) = \{\left(f(\alpha_1), f(\alpha_2), \ldots, f(\alpha_{m \ell}) \right) : f(X) \in \F_q[X] \text{ with } \deg(f) < k\}\]
    is an $\F_{q}$-linear skew quasi-cyclic $\F_{q^m}$-code of index $\ell$.
\end{theorem}

\section{Nonlinear Skew Cyclic and Nonlinear Skew Quasi-Cyclic Codes as Polynomials}\label{sec:polycharacterization}

Consider the field extensions $\F_q \leq \F_{q^a}\leq \F_{q^m}$, $\sigma$ the Frobenius automorphism with $\sigma(\alpha)=\alpha^q$, and $n$ a multiple of $m$. Notice that $(X^n-1)$, the left-sided ideal generated by $X^n-1$, is a two-sided ideal of $\F_{q^a}[X;\sigma]$ and $\F_{q^m}[X;\sigma]$, respectively. Then we will denote,

\[R_n := \F_{q^m}[X;\sigma]/\left( X^n-1\right),\]
\[ P_n:=\F_{q^a}[X;\sigma]/\left( X^n-1\right).\]

\begin{remark}
    Similar to the group algebras $R_n^{(q)}$ introduced in Huffman's study of $\F_q$-linear $\F_{q^m}$-cyclic codes, \cite{Huffman:2010}, we can express $R_n$ and $P_n$ as skew group rings
    \[R_n= \F_{q^m}[\mathbb{Z}_n, \psi], \qquad P_n= \F_{q^a}[\mathbb{Z}_n, \psi], \]
    where $\psi: \mathbb{Z}_n \rightarrow \operatorname{Aut}(\F_{q^m})$ maps $k\in \mathbb{Z}_n $ to $\sigma^k$. Then $P_n$ is canonically embedded into $R_n$.
\end{remark}

If $a=1$, we have that $\sigma$ fixes $\F_{q^a}=\F_q$. Then $\F_{q}[X;\sigma]\simeq \F_q[X]$ is commutative under polynomial multiplication and
\[ P_n=\F_{q}[X;\sigma]/\left( X^n-1\right)\simeq \F_{q}[X]/\left( X^n-1\right).\]

Let $\boldsymbol{c}=\left( c_{i,j}\right)_{0,0}^{n-1,\ell-1}\in \F_{q^m}^{n\ell}$. Define a map $\phi: \F_{q^m}^{n\ell}\rightarrow R_n^\ell$ by
\[\phi(\boldsymbol{c})=\left(c_0(X),\ldots,c_{\ell-1}(X)\right),\]
where each column $j$ of $\boldsymbol{c}$ determines a polynomial
\[c_{j}(X)=\sum_{i=0}^{n-1} c_{i,j}X^i \in R_n.\]

The map $\phi$ gives an isomorphism of the $\F_{q^m}$-vector spaces $\F_{q^m}^{n\ell}$ and $R_n^\ell$. We can use $\phi$ to associate every $(\sigma,\ell)$-QC code with a $P_n$-module structure.
\begin{theorem}\label{thm:skewQCiffPnsubmodule}
    An $\F_{q^a}$-subspace $\C\subseteq \F_{q^{m}}^{n\ell}$ is a $(\sigma,\ell)$-QC code of length $n\ell$ and index $\ell$ if and only if $\phi(\C)$ is a left $P_n$-submodule of the ring $R_n^\ell$.
\end{theorem}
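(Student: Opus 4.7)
The plan is to identify the defining shift operation of a $\sigma$-QC code with left multiplication by $X$ in $R_n^\ell$, and identify $\F_{q^a}$-linearity with closure under the scalar action of $\F_{q^a}\subseteq P_n$. Together these account for the full $P_n$-action since $P_n$ is generated as a ring by $X$ together with the constants $\F_{q^a}$.

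First, I would note that $\phi\colon\F_{q^m}^{n\ell}\to R_n^\ell$ is an $\F_{q^m}$-vector space isomorphism and, in particular, $\F_{q^a}$-linear. Thus $\C$ is an $\F_{q^a}$-subspace of $\F_{q^m}^{n\ell}$ if and only if $\phi(\C)$ is closed under the scalar action of $\F_{q^a}$ on $R_n^\ell$. Under the canonical embedding $\F_{q^a}\hookrightarrow P_n$ sending $\lambda\mapsto \lambda+(X^n-1)$, this scalar action coincides with the restriction of the left $P_n$-action to its subring $\F_{q^a}$.

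The main computation is to show that left multiplication by $X$ on $R_n^\ell$ translates, via $\phi^{-1}$, into the $\sigma$-shift $\sigma\circ s$ from Definition~\ref{def:skewversions}. For a matrix representative $\boldsymbol{c}=(c_{i,j})$ with column polynomials $c_j(X)=\sum_{i=0}^{n-1}c_{i,j}X^i$, using the skew rule $X\cdot a=\sigma(a)X$ and reducing modulo $X^n-1$ gives
\[
X\cdot c_j(X)=\sum_{i=0}^{n-1}\sigma(c_{i,j})X^{i+1}\equiv\sum_{i=0}^{n-1}\sigma(c_{i-1,j})X^i\pmod{X^n-1},
\]
with indices taken modulo $n$. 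Hence multiplication by $X$ cyclically shifts each column of $\boldsymbol{c}$ down by one row and simultaneously applies $\sigma$ to every entry. After undoing the matrix reshape, this is exactly the shift by $\ell$ positions composed with $\sigma$ coordinatewise, i.e., $\phi\circ(\sigma\circ s)=X\cdot\phi(-)$ on $\F_{q^m}^{n\ell}$.

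Combining the two observations: since $P_n$ is generated as a ring by $\F_{q^a}$ and $X$, a subset of $R_n^\ell$ is a left $P_n$-submodule precisely when it is an $\F_{q^a}$-subspace that is closed under multiplication by $X$. By the first paragraph this matches the $\F_{q^a}$-linearity condition on $\C$, and by the second paragraph it matches closure under $\sigma\circ s$. Together, these are exactly the two closure conditions defining a $\sigma$-QC code of index $\ell$, yielding both directions of the equivalence. The only real obstacle is bookkeeping for the coordinate reshape: one has to pin down the convention that converts the flat $n\ell$-tuple to an $n\times\ell$ matrix (with $\ell$-shift becoming a single row shift), after which the displayed calculation makes the rest routine.
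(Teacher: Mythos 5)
Your proof is correct and takes essentially the same approach as the paper: the central computation identifying left multiplication by $X$ on $R_n^\ell$ with the shift $\sigma\circ s$ (via $X c_j(X)\equiv\sum_i\sigma(c_{i-1,j})X^i\bmod(X^n-1)$) is exactly the paper's. Your framing around ``$P_n$ is generated as a ring by $\F_{q^a}$ and $X$'' is a slightly tidier packaging of the two directions, but it is the same argument as the paper's, which proves each implication separately and invokes $\F_{q^a}$-linearity to pass from closure under $X$ to closure under all of $P_n$.
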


\begin{proof} Let $\C$ be a $(\sigma,\ell)$-QC code of length $n\ell$ and index $\ell$. We claim that $\phi(\C)$ is a $P_n$-submodule of $R_n^\ell$. Evidently, $\phi(\C)$ is closed under addition, and we are going to check that $\phi(\C)$ is also closed under scalar multiplication by elements of $P_n$. Let $\phi(\boldsymbol{c})=\left(c_0(X),\ldots,c_{\ell-1}(X)\right)$. Given $j\in \{0,\ldots,\ell-1\}$ we have that
    \[Xc_j(X)=X\sum_{i=0}^{n-1} c_{i,j}X^i=\sum_{i=0}^{n-1} \sigma(c_{i,j})X^{i+1}\equiv \sum_{i=0}^{n-1} \sigma(c_{i-1,j})X^{i} \mod{(X^n-1)},\]
    hence
    \begin{align*}
        X \phi(\textbf{c}) & =\left(Xc_0(X),\ldots,Xc_{\ell-1}(X)\right)                                                                 \\
                           & =\left(\sum_{i=0}^{n-1} \sigma(c_{i-1,0})X^{i}, \ldots, \sum_{i=0}^{n-1} \sigma(c_{i-1,\ell-1})X^{i}\right) \\
                           & =\phi \begin{pmatrix}
                                       \sigma(c_{n-1,0}) & \cdots & \sigma(c_{n-1,\ell-1}) \\
                                       \sigma(c_{0,0})   & \cdots & \sigma(c_{0,\ell-1})   \\
                                       \vdots            & \ddots & \vdots                 \\
                                       \sigma(c_{n-2,0}) & \cdots & \sigma(c_{n-2,\ell-1})
                                   \end{pmatrix}\in \phi(\C).
    \end{align*}

    Then, by $\F_{q^a}$-linearity, $p(X)\phi(\boldsymbol{c})\in \phi(\C)$ for any $p(X)\in P_n$. Hence $\phi(\C)$ is a left $P_n$-submodule of $R_n^\ell$.

    Conversely, suppose that $D$ is a left $P_n$-submodule of $R_n^\ell$. Let $\C=\phi^{-1}(D)\subseteq \F_{q^{m}}^{n\ell}$. We claim that $\C$ is a $(\sigma,\ell)$-QC code of length $n\ell$ and index $\ell$.

    Since $\phi$ is an $\F_{q^a}$-vector space isomorphism, $\C$ is an $\F_{q^a}$-linear $\F_{q^m}$-code. To show that $\mathcal C$ is closed under $\sigma\circ s$, cf. Definition \ref{def:skewversions}, let $\boldsymbol{c}=\left( c_{i,j}\right)_{0,0}^{n-1,\ell-1}$ denote the preimage of the polynomial vector $(c_0(X),\ldots,c_{\ell-1}(X))\in D$. It is enough to show that $\sigma(s(\boldsymbol{c}))\in \phi^{-1}(D)$; i.e., $\phi\left(\sigma(s(\boldsymbol{c}))\right)\in D$. Notice that
    \begin{align*}
        \phi\left(\sigma(s(\boldsymbol{c}))\right)
         & =\phi\begin{pmatrix}
                    \sigma(c_{n-1,0}) & \cdots & \sigma(c_{n-1,\ell-1}) \\
                    \sigma(c_{0,0})   & \cdots & \sigma(c_{0,\ell-1})   \\
                    \vdots            & \ddots & \vdots                 \\
                    \sigma(c_{n-2,0}) & \cdots & \sigma(c_{n-2,\ell-1})
                \end{pmatrix}              \\
         & = \left(Xc_0(X),\ldots,Xc_{\ell-1}(X)\right)                          \\
         & = X\left(c_0(X),\ldots,c_{\ell-1}(X)\right)\in X \cdot D \subseteq D.
    \end{align*}

    Therefore, $\C$ is a $(\sigma,\ell)$-QC code of length $n\ell$ and index $\ell$.
\end{proof}

\begin{remark}
    For $\ell=1$ and $a=1$, Theorem \ref{thm:skewQCiffPnsubmodule} means that an $\F_q$-subspace $\C\subseteq \F_{q^{m}}^{n}$ is an $\F_q$-linear skew cyclic code if and only if $\phi(\C)$ is a left $P_n$-submodule of the ring $R_n$. If in addition $n=m$, an equivalent description is given by the study of circulant matrices (see also \cite[Section 8.3]{conciseencyclopedia}):

    A matrix $A=(a_{ij})\in \operatorname{Mat}_{m\times m}(\F_q)$ is called \emph{circulant} if $a_{i+1,j+1}=a_{ij}$ and $a_{i+1,1}=a_{i,m}$, for all $i,j\in \{1,\ldots,m-1\}$. Denote by $\operatorname{Circ}_m(\F_q)$ the collection of all circulant matrices in $\operatorname{Mat}_{m\times m}(\F_q)$. $\operatorname{Circ}_m(\F_q)$ is an $\F_q$-vector space of dimension $m$ and an $\F_q$-algebra.

    Given $g+(X^m-1)\in R_m$ with $g(X)=\sum_{i=0}^{m-1} g_i X^i$, let $\tilde{g}(X)=\sum_{i=0}^{m-1} g_i X^{q^i}$, its associated linearized polynomial. Let $\{\alpha, \alpha^q \ldots, \alpha^{q^{m-1}}\}$ be a normal basis of the extension $\F_q \leq \F_{q^m}$ and let $M_{\tilde{g}}$ denote the matrix of the $\F_q$-linear polynomial $\tilde{g}$ with respect to this basis. Set $\varphi(g+(X^m-1))=M_{\tilde{g}}$. Then $\varphi$ is a ring isomorphism between $R_m$ and $\operatorname{Mat}_{m\times m}(\F_q)$.

    We have that $\varphi\left( P_m\right)=\operatorname{Circ}_m(\F_q)$. Thus, if an $\F_q$-subspace $\C\subseteq \F_{q^{m}}^{m}$ is an $\F_q$-linear skew cyclic code, then $\varphi(\C)$ is a left $\operatorname{Circ}_m(\F_q)$-submodule of $\operatorname{Mat}_{m\times m}(\F_q)$.
\end{remark}

\section{Module Structure of \texorpdfstring{$\F_{q^a}[X;\sigma]$}{Skew-Polynomial}-Modules}\label{sec:structure1}

Consider the field extensions $\F_q \leq \F_{q^a}\leq \F_{q^m}$ with $r:=\frac{m}{a}$, the Frobenius automorphism $\sigma$ with $\sigma(\alpha)=\alpha^q$, and two $\F_{q^a}[X;\sigma]$-modules $C$ and $D$ such that $D\leq C\leq \F_{q^m}[X;\sigma]^\ell$. Note that $C$ and $D$ are free as $\F_{q^a}[X;\sigma]$-submodules of a free $\F_{q^a}[X;\sigma]$-module. Recall that $\F_{q^m}$ is an $\F_{q^a}$-vector space.  We can choose a normal basis $b_1, \ldots, b_r$ of $\F_{q^m}$ over $\F_{q^a}$, i.e., there exists  $\alpha\in \F_{q^m}\setminus \{0\}$ such that $b_1= \alpha, b_2=\alpha^{q^a}, \ldots, b_r = \alpha^{q^{a(r-1)}}$. In fact, \cite[Satz 1.2]{BLESSENOHL-KARSTENJOHNSEN:1986} asserts that such an $\alpha$ can be chosen to simultaneously generate a normal basis of $\F_{q^m}$ over any field intermediate to $\F_q\leq \F_{q^m}$.

Let $f(X)= \sum_{i} a_i X^i \in \F_{q^m}[X;\sigma]$. Since $\sigma$ is an $\F_q$-automorphism, $\sigma^{-1}$ is another $\F_q$-automorphism, and we can express
\[\sigma^{-i}(a_i)=\sum_{j=1}^r k_{ij} b_j= \sum_{j=1}^r k_{ij} \alpha^{q^{a(j-1)}} \mbox{ for suitable }k_{ij}\in \F_{q^a}.\]

Thus,

\begin{eqnarray*}
    f(X) = \sum_i a_i X^i = \sum_i X^i \sigma^{-i}(a_i) & = & \sum_i X^i\left( \sum_{j=1}^r k_{ij} b_j \right) \\ & = & \sum_{j=1}^r \left( \sum_i \sigma^i(k_{ij}) X^i \right)  b_j.
\end{eqnarray*}
If we consider $\F_{q^m}[X;\sigma]$ as an $\F_{q^a}[X;\sigma]$-module, it is now easy to see
\begin{equation}\label{Fqa-decomposition}
    \F_{q^m}[X;\sigma] = \bigoplus_{i=1}^r \F_{q^a}[X;\sigma] b_i.
\end{equation}

To continue with our discussion, let $\left\{\boldsymbol{r}_1, \ldots, \boldsymbol{r}_k\right\}$ be a generating set of $C$ as an $\F_{q^a}[X;\sigma]$-module, where

\[\boldsymbol{r}_i = \left( \boldsymbol{\tilde{r}}_{i,1} , \ldots, \boldsymbol{\tilde{r}}_{i,\ell} \right)\in \F_{q^m}[X;\sigma]^\ell.\]

Under the decomposition \eqref{Fqa-decomposition}, write each coordinate as

\[\boldsymbol{\tilde{r}}_{i,\iota}=\left( r_{i,(\iota-1)r+1}, r_{i,(\iota-1)r+2}, \ldots, r_{i,\iota r} \right) \in \F_{q^a} [X;\sigma]^{r}.\]

Then, $C= \langle \left( r_{i,1}, r_{i,2}, \ldots, r_{i,r\ell} \right): 1\le i\le k\rangle$ as an $\F_{q^a}[X;\sigma]$-module, and the rows of

\[ M= \begin{pmatrix}
        r_{1,1} & r_{1,2} & \cdots & r_{1,r\ell} \\
        r_{2,1} & r_{2,2} & \cdots & r_{2,r\ell} \\
        \vdots  & \vdots  & \ddots & \vdots      \\
        r_{k,1} & r_{k,2} & \cdots & r_{k,r\ell}
    \end{pmatrix}\]
generate $C$ as an $\F_{q^a}[X;\sigma]$-module. \\

For the next step in our discussion, we need to review total divisors and the Smith normal form for noncommutative rings. For reference, see \cite{Berrick:2000}.

\begin{definition}
    Let $s, t\in R$. Recall that $s$  is a {\it left divisor} of $t$ if there exists $u\in R$ such that $t= su$, and $s$  is a {\it right divisor} of $t$ if there exists $u\in R$ such that $t= us$. Left and right divisors will be denoted by $s \mid_l t$ and $s \mid_r t$, respectively. A much stronger statement is that $s$ is a {\it total divisor} of $t$. We say that $s$ totally divides $t$ (or $s \mid \mid t$) if for all $u\in R$, $s \mid_l ut$ and $s\mid_r tu$.
\end{definition}

\begin{theorem}[Smith Normal Form]
    Let $R$ be a (two-sided) Euclidean ring without zero divisors and let $A=(a_{ij})$ be an $n\times p$ matrix over $R$. Then, there exist invertible matrices $P$ and $Q$ over $R$ such that
    \[PAQ= \operatorname{diag}(e_1, \ldots, e_r, 0, \ldots, 0),\]
    a diagonal matrix with $e_1 \mid \mid e_2 \mid \mid \ldots \mid \mid e_{r}$ and $r\le \min\{n,p\}$. Moreover, $P$ may be obtained by applying a sequence of row operations to the $n\times n$ identity matrix, and $Q$ may be obtained by applying a sequence of column operations to the $p\times p$ identity matrix.
\end{theorem}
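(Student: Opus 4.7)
The plan is to prove the theorem in two stages, following the classical template for Smith normal form while accounting for the one-sided nature of the Euclidean divisions in $R$: first reduce $A$ to a diagonal matrix by elementary row and column operations, then upgrade the resulting diagonal chain to satisfy total divisibility.

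For the diagonalization stage, I would proceed by a ``minimize-and-clear'' induction on $\min\{n,p\}$. Pick a nonzero entry of minimal Euclidean degree and move it to position $(1,1)$ via row and column transpositions. For each $i>1$, the right Euclidean algorithm gives $a_{i1}=q_i a_{11}+r_i$ with $\deg r_i<\deg a_{11}$; subtracting $q_i$ times row $1$ from row $i$ (a left elementary operation) replaces $a_{i1}$ by $r_i$. Using the left Euclidean algorithm symmetrically reduces the entries of row $1$ by right elementary column operations. Any nonzero remainder appearing in row $1$ or column $1$ is swapped into the $(1,1)$ position, strictly lowering its degree, so this loop terminates because degrees are well-ordered. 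A standard extra step also eliminates any entry in the remaining $(n-1)\times(p-1)$ submatrix that is not right-divisible by $a_{11}$, and inducting on that submatrix yields a diagonal matrix $\operatorname{diag}(e_1,\ldots,e_r,0,\ldots,0)$ together with the list of elementary operations applied.

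The main obstacle is the second stage: promoting plain divisibility between consecutive $e_i$ to total divisibility $e_1\mid\mid e_2\mid\mid \cdots\mid\mid e_r$. I would treat one adjacent pair at a time by restricting to the $2\times 2$ diagonal block $\operatorname{diag}(e_i,e_{i+1})$. Adding column $i+1$ into column $i$ introduces an off-diagonal entry, and re-running the stage-one reduction on this $2\times 2$ block produces a new diagonal block whose $(1,1)$ entry is $\gcrd(e_i,e_{i+1})$ (produced by the appropriate B\'ezout combination) and whose $(2,2)$ entry is its two-sided complement, essentially a version of $\lclm(e_i,e_{i+1})$. Iterating left-to-right along the diagonal, and re-sweeping if necessary, terminates because the degree sequence $(\deg e_1,\ldots,\deg e_r)$ strictly decreases in lexicographic order at every nontrivial step. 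The subtle point here is checking that the new diagonal entries are genuine \emph{total} divisors of one another rather than merely one-sided divisors; this is where the two-sided Euclidean hypothesis is used essentially and is precisely what Section~\ref{sec:totaldivisors}, announced in the introduction, is designed to support.

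Finally, I would read off $P$ and $Q$ by bookkeeping: every manipulation above is either a left multiplication by an invertible elementary matrix (a row operation) or a right multiplication by one (a column operation), so applying the same sequences to $I_n$ and $I_p$ respectively produces invertible matrices $P,Q$ over $R$ with $PAQ=\operatorname{diag}(e_1,\ldots,e_r,0,\ldots,0)$, as required.
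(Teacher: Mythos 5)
The paper does not prove this theorem; it states it as background and cites Berrick (2000), so there is no ``paper's approach'' to compare against. Your stage~1 (minimize-and-clear diagonalization using both the left and right Euclidean algorithms, then inducting on the minor) is standard and correct in outline, and the final bookkeeping for $P$ and $Q$ is fine. The problem is stage~2.

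To upgrade one-sided divisibility to total divisibility you propose, for each adjacent pair $(e_i,e_{i+1})$, to add column $i+1$ into column $i$ with coefficient $1$ and re-run the $2\times 2$ diagonalization, noting this puts $\gcrd(e_i,e_{i+1})$ in the first slot. That is true but not enough: stage~1 already arranges $e_i \mid_r e_{i+1}$, and then $\gcrd(e_i,e_{i+1})=e_i$, so your loop makes no progress and terminates with only one-sided divisibility. In a non-commutative ring, $e_i \mid_r e_{i+1}$ is genuinely weaker than $e_i \mid\mid e_{i+1}$; by the paper's own definition, $s\mid\mid t$ requires $s\mid_l ut$ and $s\mid_r tu$ for \emph{all} $u\in R$, not just $u=1$. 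The correct move is: if $e_i\not\mid\mid e_{i+1}$, pick a witness $u$ with $e_i\nmid_r e_{i+1}u$ (or $e_i\nmid_l u\,e_{i+1}$), then add $u$ times column $i+1$ to column $i$ (respectively, $u$ times row $i+1$ to row $i$); this places $e_{i+1}u$ (resp.\ $ue_{i+1}$) in the off-diagonal slot, and the new gcrd (resp.\ gcld) with $e_i$ has strictly smaller degree. Well-ordering of degrees then yields termination with a genuine total-divisor chain. Finally, deferring this gap to Section~\ref{sec:totaldivisors} is a misattribution: that section concerns total divisors of $X^n-1$ in the particular ring $\F_{q^a}[X;\sigma]$ (to justify the one-pass shortcut of Section~\ref{sec:structure2}); it does not supply a substitute for the arbitrary-multiplier step, and the theorem you are proving is stated for arbitrary two-sided Euclidean rings without zero divisors.
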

\begin{theorem}[The Invariant Factor Theorem]
    Let $M$ be a finitely generated right $R$-module, with $R$ a (two-sided) Euclidean ring without zero divisors. Then,
    \[ M \simeq R/e_1 R \oplus \ldots \oplus R/e_{r} R \oplus R^{s},\]
    where $e_1, \ldots, e_r$ are nonunits in $R$, $e_1 \mid \mid e_2 \mid \mid \ldots \mid \mid e_r$ and $s\geq0$.
\end{theorem}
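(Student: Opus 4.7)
The plan is to reduce $M$ to a presentation $M \cong R^n/K$ and then apply the Smith Normal Form.

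Since $M$ is finitely generated as a right $R$-module, pick generators $m_1, \ldots, m_n$ and form the surjective right $R$-module homomorphism $\pi \colon R^n \to M$ sending the standard basis vector $\mathbf{u}_i$ to $m_i$; set $K = \ker \pi$, so $M \cong R^n/K$. Because $R$ is two-sided Euclidean (in particular, right Euclidean), it is a right principal ideal domain, and a standard induction on rank shows that every submodule of a finitely generated free right $R$-module is itself free of rank at most $n$: project $K$ onto the last coordinate, note that the image is a principal right ideal, split off a free summand, and apply induction to $K \cap R^{n-1}$. Thus $K$ is free with basis $\mathbf{k}_1, \ldots, \mathbf{k}_p$ for some $p \leq n$, and writing $\mathbf{k}_j = \sum_i a_{ji} \mathbf{u}_i$ yields a $p \times n$ matrix $A = (a_{ji})$ over $R$ whose rows generate $K$.

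Apply the Smith Normal Form to $A$: there exist invertible matrices $P$ and $Q$ over $R$ with $PAQ = \operatorname{diag}(e_1, \ldots, e_t, 0, \ldots, 0)$ and $e_1 \mid\mid \cdots \mid\mid e_t$. The column operations recorded by $Q$ produce a new right $R$-basis $\mathbf{f}_1, \ldots, \mathbf{f}_n$ of $R^n$ (the columns of $Q$ read against $\mathbf{u}_1, \ldots, \mathbf{u}_n$), while the row operations recorded by $P$ produce an equivalent generating set of $K$ that, expressed in the new basis, consists precisely of $e_1 \mathbf{f}_1, \ldots, e_t \mathbf{f}_t$ together with zeros. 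Hence $K = \bigoplus_{j=1}^{t} e_j \mathbf{f}_j R$ sits inside $R^n = \bigoplus_{j=1}^{n} \mathbf{f}_j R$, and taking quotients yields
\[ M \cong R^n / K \cong \bigoplus_{j=1}^{t} R / e_j R \ \oplus\ R^{\,n-t}. \]
Discarding any $e_j$ that is a unit (which contributes a zero summand), relabeling the remaining ones as $e_1, \ldots, e_r$, and setting $s = n - t$ produces the stated decomposition.

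The main obstacle is the non-commutative bookkeeping: one must verify that the row and column operations guaranteed by the Smith Normal Form transfer, respectively, into a new generating set for the submodule $K$ and a new right $R$-basis for the ambient $R^n$, rather than merely altering the entries of $A$ as an abstract matrix. The total divisibility chain $e_j \mid\mid e_{j+1}$ is inherited directly from the Smith Normal Form theorem. The preliminary fact that submodules of finitely generated free right $R$-modules are free --- a standard consequence of $R$ being a one-sided principal ideal domain --- is what lets us represent $K$ by a finite matrix in the first place, and is the one ingredient that does not appear explicitly in the preceding excerpt.
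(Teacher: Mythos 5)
The paper does not prove this statement; it appears in Section~5 as review material with only a pointer to a reference, so there is no in-paper argument to compare against. Your proof is the standard derivation of the invariant factor decomposition from the Smith Normal Form, and it is essentially sound: present $M$ as $R^n/K$, argue that $K$ is free of rank at most $n$, represent the inclusion $K\hookrightarrow R^n$ by a matrix over $R$, diagonalize by invertible row and column operations, and read off the quotient.

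Two pieces of bookkeeping deserve a second look, and both stem from left/right conventions that genuinely matter over a noncommutative ring. First, the statement concerns a \emph{right} $R$-module, but you write scalars on the left of basis vectors throughout (for example $a_{ji}\mathbf{u}_i$ and later $e_j\mathbf{f}_j$), which is the left-module convention. With the right-module convention $\mathbf{k}_j=\sum_i\mathbf{u}_i a_{ij}$, the matrix $A$ is $n\times p$, and it is the row operations $P$ that recode the basis of $R^n$ while the column operations $Q$ recode the generating set of $K$ --- the opposite assignment from the one you describe. The resulting isomorphism $R^n/K\simeq\bigoplus_j R/e_jR\oplus R^{n-t}$ is unchanged, but the intermediate bookkeeping as written would mislead a reader who tried to implement it. Second, under the paper's own terminology (the Ore proposition in Section~2), the \emph{right} Euclidean algorithm $f=qg+r$ yields that $R$ is a \emph{left} principal ideal ring; to conclude that right ideals are principal, which is what the free-submodule induction for a right module actually needs, one should invoke the left Euclidean property. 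Since $R$ is assumed two-sided Euclidean, both halves are available and the substance of the argument survives, but the parenthetical ``(in particular, right Euclidean), it is a right principal ideal domain'' attributes the conclusion to the wrong side of the hypothesis.
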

We use the Smith normal form of the matrix $M$ to find a basis $\boldsymbol{c}_1, \boldsymbol{c}_2,\ldots, \boldsymbol{c}_{\xi}$ of $C$ as a free $\F_{q^a}[X;\sigma]$-module: the Smith normal form $J$ of $M$ provides us with invertible matrices $S$ and $T$ such that
\[J:= SMT = \begin{pmatrix}
        h_1    & 0      & \cdots & 0      & 0      & \cdots \\
        0      & h_2    & \cdots & 0      & 0      & \cdots \\
        \vdots & \vdots & \ddots & \vdots & \vdots & \cdots \\
        0      & 0      & \cdots & h_\xi  & 0      & \cdots \\
        0      & 0      & \cdots & 0      & 0      & \cdots \\
        \vdots & \vdots & \vdots & \vdots & \vdots & \ddots \\
    \end{pmatrix}, \]
where $h_1, h_2, \ldots, h_\xi$ are the nonzero diagonal entries of $J$ with $h_i$ a total divisor of $h_{i+1}$ for all $i=1, \ldots, \xi-1$. The basis elements $\boldsymbol{c}_i$ are the nonzero rows of $JT^{-1}$, while the rows of $T^{-1}$ form an associated basis of $\F_{q^a}[X;\sigma]^{r\ell}$.

Next, let $\left\{\boldsymbol{t}_1, \ldots, \boldsymbol{t}_\delta\right\}$ be a generating set of $D$ as an $\F_{q^a}[X;\sigma]$-module, where
\[\boldsymbol{t}_i = \left( \boldsymbol{\tilde{t}}_{i,1} , \ldots, \boldsymbol{\tilde{t}}_{i,\ell} \right)\in \F_{q^m}[X;\sigma]^\ell.  \]
Under the decomposition \eqref{Fqa-decomposition}, write
\[\boldsymbol{\tilde{t}}_{i,\iota}=\left( t_{i,(\iota-1)r+1}, t_{i,(\iota-1)r+2}, \ldots, t_{i,\iota r} \right) \in \F_{q^a} [X;\sigma]^{r}. \]
Then, $D= \langle \left( t_{i,1}, t_{i,2}, \ldots, t_{i,r\ell} \right): 1\le i \le \delta \rangle$ as an $\F_{q^a}[X;\sigma]$-module.
Express each row vector $\left( t_{i,1}, \ldots, t_{i,r\ell} \right)$ with respect to the basis $\boldsymbol{c}_1, \boldsymbol{c}_2,\ldots, \boldsymbol{c}_{\xi}$ of $C$ as
\[ \left( t_{i,1}, \ldots, t_{i,r\ell} \right) = \sum_{j=1}^\xi g_{i,j} \boldsymbol{c}_j, \; g_{i,j}  \in\F_{q^a} [X;\sigma]. \]
Then the rows of
\[ N= \begin{pmatrix}
        g_{1,1}      & g_{1,2}      & \cdots & g_{1,\xi}      \\
        g_{2,1}      & g_{2,2}      & \cdots & g_{2,\xi}      \\
        \vdots       & \vdots       & \ddots & \vdots         \\
        g_{\delta,1} & g_{\delta,2} & \cdots & g_{\delta,\xi}
    \end{pmatrix}\]
generate $D$ as an $\F_{q^a}[X;\sigma]$-submodule of $C$, with respect to the basis vectors $\boldsymbol{c}_1, \boldsymbol{c}_2,\ldots, \boldsymbol{c}_{\xi}$ of~$C$.
Finally, put the matrix $N$ into Smith normal form with nonzero diagonal entries $d_1, d_2, \ldots, d_{\zeta}$,
\[  \begin{pmatrix}
        d_1    & 0      & \cdots & 0       & 0      & \cdots \\
        0      & d_2    & \cdots & 0       & 0      & \cdots \\
        \vdots & \vdots & \ddots & \vdots  & \vdots & \cdots \\
        0      & 0      & \cdots & d_\zeta & 0      & \cdots \\
        0      & 0      & \cdots & 0       & 0      & \cdots \\
        \vdots & \vdots & \vdots & \vdots  & \vdots & \ddots \\
    \end{pmatrix}.\]

This gives a new basis $\boldsymbol{c}_1^*, \boldsymbol{c}_2^*, \ldots, \boldsymbol{c}_{\xi}^*$ for $C$ such that $ \left\{  d_i \boldsymbol{c}_i^* : i =1, \ldots, \zeta  \right\} $ is a basis of~$D$. Set $d_i=0$ for all $i>\zeta$. We have that, as $\F_{q^a}[X;\sigma]$-modules,

\begin{align*}
    C / D & =\left(\bigoplus_{i=1}^\xi \F_{q^a} [X;\sigma]\boldsymbol{c}_i^* \right) \Big/ \left(\bigoplus_{i=1}^\xi  \F_{q^a} [X;\sigma]d_i\boldsymbol{c}_i^* \right) \\
          & \simeq \bigoplus_{i=1}^\xi \left(  \F_{q^a} [X;\sigma] /  \F_{q^a} [X;\sigma] d_i \right).
\end{align*}

\section{Total Divisors in Quotients of Skew Polynomial Rings}\label{sec:totaldivisors}

In Section \ref{sec:structure1} we described the module structure of $C/D$ using two applications  of the Smith normal form, where $C$ and $D$ are arbitrary $\F_{q^a}[X;\sigma]$-submodules such that $D\leq C\leq \F_{q^m}[X;\sigma]^\ell$. Nonlinear skew quasi-cyclic codes, the primary focus of this article, fall into the special case where $D= (X^n-1)\F_{q^m}[X;\sigma]^\ell$, the $\F_{q^m}[X;\sigma]$-submodule of all scalar multiples by the scalar $X^n-1\in \F_{q^a}[X;\sigma]$. In this special case, it is sufficient to use a single application of the Smith normal form. This reduction requires a rather deep result about total divisors which is presented in this section.

Let $R=\F_{q^a}[X;\sigma]$ for the Frobenius automorphism $\sigma$ with $\sigma(\alpha)=\alpha^q$. Recall that $R$ has center $Z(R)= \F_q[X^a; \sigma]$ and that $R$ is a left and right Euclidean ring without zero divisors. In particular, $R$ is both a left and right principal ideal ring.
For any $t \in R$ we will denote by $Rt$ the left ideal of $R$ generated by $t$, while $tR$ is the right ideal of $R$ generated by $t$, and
\[R t R=\left\{\sum_{i} r_{i} t r_{i}^{\prime} \mid r_{i}, r_{i}^{\prime} \in R\right\}\]
denotes the two-sided ideal of $R$ generated by $t$.

\begin{definition}
    For any left, right, or two-sided ideal $I\ne 0$ of $R$, denote by $g_I \in I$ the monic polynomial of smallest degree in $I$. This polynomial $g_I$ is unique and generates $I$. We can write:
    \begin{itemize}
        \item $I=Rg_I$, if $I$ is a left ideal,
        \item $I=g_IR$, if $I$ is a right ideal,
        \item $I=Rg_I=g_IR=Rg_IR$, if $I$ is a two-sided ideal.
    \end{itemize}
\end{definition}

We note some basic criteria for checking if a polynomial is a total divisor of another.
\begin{lemma}\label{thm:allb-fistcriteria}
    Let $s,t\in R$ with $s\ne 0$. The polynomial $s$ is a total divisor of $t$ if and only if $s\mid_l \alpha X^{k} t,s\mid_r t \alpha X^{k}$ for all $\alpha \in \mathbb{F}_{q^a}$ and $0\le k < \deg s$.
\end{lemma}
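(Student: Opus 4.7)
The forward direction is immediate: taking $u = \alpha X^k \in R$ in the two conditions defining $s \mid\mid t$ yields exactly $s \mid_l \alpha X^k t$ and $s \mid_r t\alpha X^k$.

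For the converse, by symmetry it suffices to prove $Rt \subseteq sR$; the inclusion $tR \subseteq Rs$ follows from the right-sided half of the hypothesis by an analogous argument. Any $u \in R$ can be written as $u = \sum_k \alpha_k X^k$ with $\alpha_k \in \F_{q^a}$, giving $ut = \sum_k \alpha_k X^k t$. Since $sR$ is closed under addition, $Rt \subseteq sR$ will follow once we show $\alpha X^k t \in sR$ for every $\alpha \in \F_{q^a}$ and every $k \geq 0$. We may also assume $s$ is monic: if $s$ has leading coefficient $c \in \F_{q^a}\setminus\{0\}$ and $s^* := c^{-1} s$, then $sR = c \cdot s^*R$ and $Rs = Rs^*$, so the hypothesis $\alpha X^k t \in sR$ translates to $(c^{-1}\alpha) X^k t \in s^*R$ (the parameter still ranges over all of $\F_{q^a}$), and $s\mid\mid t \iff s^* \mid\mid t$.

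With $s = X^d + \sum_{i<d} a_i X^i$ monic of degree $d$, the plan is to proceed by strong induction on $k$; the base case $k < d$ is the hypothesis. For the inductive step, assume $k \geq d$ and the conclusion holds for every smaller exponent. Using $X^d t = st - \sum_{i<d} a_i X^i t$, rewrite
\[\alpha X^k t \;=\; \alpha X^{k-d}(X^d t) \;=\; \alpha X^{k-d} s t \;-\; \sum_{i<d}\alpha\,\sigma^{k-d}(a_i)\, X^{k-d+i} t.\]
Every term in the sum has $X$-exponent $k-d+i < k$ and scalar coefficient in $\F_{q^a}$, so it belongs to $sR$ by the induction hypothesis. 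To handle $\alpha X^{k-d} s t$, expand both $\alpha X^{k-d}\,s$ and $s\,\sigma^{-d}(\alpha) X^{k-d}$ using the relation $X^j\beta = \sigma^j(\beta) X^j$; the leading terms $\alpha X^k$ cancel and one obtains the commutation identity
\[\alpha X^{k-d} s \;=\; s\,\sigma^{-d}(\alpha) X^{k-d} \;+\; \sum_{i<d}\bigl[\alpha\,\sigma^{k-d}(a_i) - a_i\,\sigma^{i-d}(\alpha)\bigr] X^{k-d+i}.\]
Multiplying on the right by $t$, the term $s \cdot (\sigma^{-d}(\alpha) X^{k-d} t)$ lies in $sR$ trivially, while every remaining summand is of the form $\beta X^j t$ with $\beta \in \F_{q^a}$ and $j = k-d+i < k$, hence lies in $sR$ by the induction hypothesis. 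Summing places $\alpha X^k t$ in $sR$, completing the inductive step.

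The delicate part is the commutation identity: the skew multiplication forces $\sigma$ to shift both the free scalar $\alpha$ and the fixed coefficients $a_i$ of $s$, and the essential point is that after cancellation of the matching leading term $\alpha X^k$ every residual coefficient still lies in $\F_{q^a}$ while the $X$-exponent drops strictly below $k$, so that the strong induction hypothesis is directly applicable.
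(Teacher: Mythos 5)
Your proof is correct, but it is a more elaborate route to the same destination. The paper's converse argument is essentially one line: given an arbitrary $g \in R$, apply the left (resp.\ right) Euclidean algorithm to write $g = sq + r$ (resp.\ $g = qs + r$) with $\deg r < \deg s$; then $gt = sqt + rt$ shows $s \mid_l gt \iff s \mid_l rt$, and likewise $tg = tqs + tr$ shows $s \mid_r tg \iff s \mid_r tr$, after which additivity reduces to the monomials $\alpha X^k$. Your strong induction on the exponent $k$ is, in effect, a hand-rolled reimplementation of that one division step: the substitution $X^d t = st - \sum_{i<d} a_i X^i t$ plus the commutation identity that moves $s$ from the middle to the far left is exactly the degree-reduction the Euclidean algorithm performs abstractly. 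Your version does have the minor virtue of making visible that every residual scalar ($\alpha\sigma^{k-d}(a_i)$, $\sigma^{-d}(\alpha)$, $\alpha\sigma^{k-d}(a_i)-a_i\sigma^{i-d}(\alpha)$) really does stay in $\F_{q^a}$, something the paper leaves implicit, and the monic-normalization preamble ($sR = cs^*R$, $Rs = Rs^*$, with $c^{-1}\alpha$ still ranging over all of $\F_{q^a}$) is carefully handled. Two small remarks: the appeal to ``symmetry'' between $Rt \subseteq sR$ and $tR \subseteq Rs$ slightly oversells the parallel, since the right-sided case drops the commutation identity entirely ($t\alpha X^{k-d}s$ already lies in $Rs$), so it is in fact the easier half; and the whole induction could be collapsed to a single left division of $\alpha X^k$ by $s$, after which you would be reproducing the paper's argument verbatim.
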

\begin{proof}
    We need to check that $s\mid_l g t$ and $s\mid_{r} t g$ for all $g \in R=\mathbb{F}_{q^a}[X; \sigma]$. If $\deg g \geq \deg s$, we can use left or right division algorithm of $g$ by $s$. In particular, we can write $g=q s+r$ with $q, r \in R$, $\deg r<\deg s$, and $s \mid_r t g$ if and only if $s\mid_{r} t r$. Thus, it suffices to check only those $g \in R$ with $\deg g<\deg s$. Moreover, by additivity, it suffices to check only $g=\alpha X^{k}$ with $\alpha \in \mathbb{F}_{q^a}, k < \deg s.$
\end{proof}

\begin{lemma}\label{thm:allb-secondcriteria}
    Let $s,t\in R$ with $t\ne 0$. The polynomial $s$ is a total divisor of $t$ if and only if $s\mid_l g_{I}$ and $s\mid_{r} g_{I}$, where $I=R t R$.
\end{lemma}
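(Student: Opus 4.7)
The plan is to reformulate the total-divisor condition as an ideal-containment statement, then read off both directions from the fact that the two-sided ideal $I=RtR$ is generated on each side by $g_I$.

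First I would observe that $s\mid_l u t$ for all $u\in R$ is equivalent to $Rt\subseteq sR$, since $s\mid_l v$ means $v\in sR$. Because $sR$ is a right ideal, this is in turn equivalent to $RtR\subseteq sR$. Symmetrically, $s\mid_r tu$ for all $u\in R$ is equivalent to $tR\subseteq Rs$, and hence to $RtR\subseteq Rs$. Therefore
\[s\mid\mid t\iff I\subseteq sR\ \text{and}\ I\subseteq Rs,\qquad I=RtR.\]

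Next I would invoke the hypothesis that $R=\F_{q^a}[X;\sigma]$ is both a left and right principal ideal ring, so that the two-sided ideal $I$ admits a single generator on either side; in the notation of the paper, $I=Rg_I=g_I R$. Containment of ideals can then be tested on generators: $I\subseteq sR$ iff $g_I\in sR$ iff $s\mid_l g_I$, and $I\subseteq Rs$ iff $g_I\in Rs$ iff $s\mid_r g_I$. Combining this with the reformulation above yields the desired equivalence in one stroke.

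The only mildly subtle point is being careful with sides in the noncommutative setting, in particular that $Rt\subseteq sR$ already forces $RtR\subseteq sR$ (using that $sR$ is closed under right multiplication), and the analogous remark for $Rs$. Once that is checked, no further argument is needed; the result is essentially a dictionary between divisibility on one side by $s$ and membership in the one-sided ideal $sR$ (resp.\ $Rs$), combined with the two-sided principal-ideal property of $R$. I do not anticipate any real obstacle; the proof is a couple of lines once the reformulation is in place.
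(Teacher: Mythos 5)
Your proposal is correct and follows the same underlying strategy as the paper: both exploit that $I=RtR=Rg_IR=Rg_I=g_IR$ and translate total divisibility into one-sided divisibility of $g_I$. The paper phrases the two implications more ad hoc (forward via $g_I\in RtR$, reverse via the transitivity chain $s\mid_r g_I\mid_r tg$), whereas you package everything into the single equivalence $s\mid\mid t \iff I\subseteq sR \text{ and } I\subseteq Rs$ and then test the containments on the generator $g_I$; this is a slightly cleaner organization of the same argument, and your remark that $Rt\subseteq sR$ already forces $RtR\subseteq sR$ is exactly the step the paper passes over silently when it moves from the definition to "$s$ divides every element $\sum r_i t r_i'$."
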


\begin{proof}
    Suppose that $s$ is a total divisor of $t$. Then $s$ is both a left and right divisor of every element of the form $\sum_i r_i t r_i^\prime$. That is, $s$ is both a left and right divisor of every element in $I =RtR\ne 0$. In particular, $s$ is both a left and right divisor of $g_I \in I$.

    Now, let $s\mid_l g_{I}$, $s\mid_{r} g_{I}$, and $g\in R$. Then $tg\in RtR=Rg_I$, which implies that $g_I \mid_r tg$, and since $s\mid_r g_I$, we have  $s \mid_r tg$ by transitivity. Similarly, we can show that $s\mid_l gt$.
\end{proof}

The following result is instrumental in the one-step construction of a basis for the $P_n$-submodules of $R_n^\ell$ in Section \ref{sec:structure2} and in the results below. A proof may be found in \cite[Thm. 1.1.22]{Jacobson:2009}.

\begin{lemma}\label{lem:selftotaldivisor}
    Let $s\in R$. Then $s$ is a total divisor of itself if and only if $s=\gamma c X^k$, where $\gamma \in \F_{q^a}$, $k \geq 0$, and $c\in Z(R)= \F_q[X^a; \sigma]$ a monic polynomial in the center of $R$.
\end{lemma}

\begin{corollary}\label{thm:form-of-gI}
    Let $I=RtR\ne 0$. Then $g_I$ is a total divisor of itself and $g_I= c X^k$ with $c$ a monic polynomial in $Z(R)= \F_q[X^a; \sigma]$ and $k\ge 0$.
\end{corollary}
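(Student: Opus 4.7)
The plan is that the corollary is essentially a two-line consequence of Theorem~\ref{thm:selftotaldivisor}, so the proof should be a short reduction rather than any new computation. First I would verify that $g_I$ is a total divisor of itself. This is immediate from the fact that $I = Rg_IR$ is two-sided: for every $r \in R$, both $rg_I$ and $g_I r$ lie in $I = Rg_I = g_IR$, which gives $g_I \mid_l rg_I$ and $g_I \mid_r g_I r$. Equivalently, one can invoke Lemma~\ref{thm:allb-secondcriteria} with $t = g_I$, since the two-sided ideal generated by $g_I$ is still $I$, and the criterion collapses to the trivial relations $g_I \mid_l g_I$ and $g_I \mid_r g_I$.

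Next I would apply Theorem~\ref{thm:selftotaldivisor} to $s = g_I$ to obtain a decomposition of the form $g_I = \gamma c X^k$ with $\gamma \in \F_{q^a}$, $k \geq 0$, and $c$ a monic polynomial in $Z(R) = \F_q[X^a;\sigma]$. All that remains is to rule out a nontrivial scalar $\gamma$, which follows from the very definition of $g_I$ as the monic polynomial of least degree in $I$. Since $c$ is monic of degree $aM$, say, the polynomial $cX^k$ has leading term $X^{aM+k}$, and left-multiplication by $\gamma$ rescales the leading coefficient to $\gamma$ itself. Monicity of $g_I$ therefore forces $\gamma = 1$, and we conclude $g_I = cX^k$ as desired.

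There is no real obstacle here; the only potentially delicate step is ensuring the leading coefficient of $\gamma c X^k$ really is $\gamma$, which is straightforward since $c \in Z(R)$ has its coefficients in $\F_q$ (all fixed by $\sigma$), so no $\sigma$-twist enters when we read off the top-degree term.
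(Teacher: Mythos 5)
Your proof is correct and takes essentially the same route as the paper: establish that $g_I$ is a total divisor of itself via Lemma~\ref{thm:allb-secondcriteria} (or, as you also note, directly from the two-sidedness of $I = Rg_I = g_IR$), apply Theorem~\ref{thm:selftotaldivisor} to obtain $g_I = \gamma c X^k$, and use monicity of $g_I$ to force $\gamma = 1$. Your extra verification that the leading coefficient of $\gamma c X^k$ really is $\gamma$ is a harmless elaboration of a step the paper leaves implicit.
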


\begin{proof}
    Notice that $g_I \mid_l g_I$ and $g_I \mid_r g_I$. Thus, by Lemma \ref{thm:allb-secondcriteria} it follows that $g_I$ is a total divisor of itself. Then, by Lemma \ref{lem:selftotaldivisor}, we have that $g_I = \gamma c X^k$, with $\gamma\in \F_{q^a},$ $k\ge 0$ and $c\in Z(R)$ monic. As $g_I$ is monic, we have that $\gamma=1$ and $g_I = c X^k$.
\end{proof}

\begin{remark}
    Let $I=RtR\ne 0$. Then $g_I=cX^k$, where $k$ denotes the maximum integer with $X^k \mid_r t$, so that $t=t'X^k$ for some $t'\in R$, and $c$ is the maximum degree monic polynomial in $Z(R)$ with $c\mid_r t^\prime$.
\end{remark}

\begin{proposition}\label{thm:intermediate}\
    \begin{enumerate}
        \item[$(1)$] Let $s,t\in R$ with $st\in Z(R)$. Then $st = ts$.
        \item[$(2)$] Let $s\in R, t\in Z(R)$ with $s\mid_r t$. Then $s\mid_l t$ and $s \mid\mid t$.
        \item[$(3)$] Let $s\in Z(R), t\in R$ with $s\mid_r t$. Then $s\mid_l t$ and $s \mid\mid t$.
        \item[$(4)$] Let $s,t\in R$ and $c\in Z(R)$ with $s\mid_r c$ and $c\mid_r t$. Then $s \mid\mid t$.
    \end{enumerate}
\end{proposition}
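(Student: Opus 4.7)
My plan is to prove the three parts in order; Part (1) contains the main idea, and Parts (2) and (3) follow with comparatively little work.

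For Part (1), the key step is to show $s \mid_l t$. From $t = us$ (given by $s \mid_r t$) and $t \in Z(R)$, I have $st = ts$, i.e., $s(us) = (us)s$. By associativity this reads $(su)s = (us)s$, and cancelling $s$ on the right --- valid since $R = \F_{q^a}[X;\sigma]$ is a domain --- yields $su = us$. Hence $t = us = su$, proving $s \mid_l t$. With both $t = us$ and $t = su$ available, total divisibility follows from centrality of $t$: for any $g \in R$, $gt = tg = (su)g = s(ug)$ shows $s \mid_l gt$, and $tg = gt = g(us) = (gu)s$ shows $s \mid_r tg$.

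Part (2) is direct. Since $s$ is central, $t = us$ rewrites as $t = su$, so $s \mid_l t$; for any $g$, centrality of $s$ gives $gt = g(su) = (gs)u = s(gu)$ and $tg = (us)g = u(gs) = (ug)s$. For Part (3), I apply Part (1) to $s \mid_r c$ (with $c$ central) to get $s \mid\mid c$, and Part (2) to $c \mid_r t$ to get $c \mid\mid t$. The conclusion $s \mid\mid t$ then follows by transitivity of total divisibility: from $s \mid\mid c$, taking $g = 1$, there exist $m, m' \in R$ with $c = sm$ and $c = m's$; from $c \mid\mid t$, for each $g$ there exist $n, n' \in R$ with $gt = cn$ and $tg = n'c$. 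Substituting yields $gt = s(mn)$ and $tg = (n'm')s$, hence $s \mid_l gt$ and $s \mid_r tg$.

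The main obstacle is the step $s \mid_r t \implies s \mid_l t$ in Part (1), where the left--right asymmetry inherent to skew polynomial rings has to be broken. The trick of multiplying the centrality relation $st = ts$ through and cancelling in the domain sidesteps this neatly, avoiding any appeal to the skew field of fractions of $R$. Everything afterward is routine: centrality lets one freely commute $t$ (or $s$, or $c$) past arbitrary elements, reducing total divisibility to the already-established left and right divisibility.
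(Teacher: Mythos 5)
Your proof is correct and takes essentially the same route as the paper's: centrality of $t$ (respectively $s$, $c$) is used to upgrade one-sided divisibility to two-sided and then to total divisibility, and Part (3) composes Parts (1) and (2) via transitivity. Your cancellation-in-a-domain argument in Part (1), justifying $us = su$, and your explicit check of transitivity of total divisibility in Part (3) merely fill in details the paper leaves implicit.
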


\begin{proof}
    $(1)$ For $s=0$, the statement is trivial. For $s\ne 0$, notice that $st\in Z(R)$ gives $s(st) = (st)s = s(ts)$ and use that $R$ has no zero divisors to cancel $s$.

    $(2)$ Notice that $s\mid_r t$ implies that $t=s^\prime s$ for some $s^\prime\in R$. Since $t\in Z(R)$ it follows that $t=s^\prime s=s s^\prime$, see Part $(1)$. Thus, $s\mid_l t$. Then, for every $g\in R$, we have that $tg=gt=gs^\prime s$, that is, $s\mid_r tg$. Similarly, we can show that $s\mid_l gt$, and $s$ is a total divisor of $t$.

    $(3)$ Observe that $s \mid_r t$ implies that $t=s^\prime s$ for some $s^\prime \in R$. Since $s\in Z(R)$ it follows that $t=s^\prime s= s s^\prime$. Thus, $s\mid_l t$. Then, for every $g\in R$, we have that $tg =s^\prime s g = s^\prime g s $, that is, $s \mid_r tg$. Similarly, we can show that $s\mid_l gt$, and $s$ is a total divisor of $t$.

    $(4)$ Since $c\in Z(R)$ and $s\mid_r c$, Part $(2)$ implies $s\mid \mid c$. Similarly, since $c\in Z(R)$ and $c\mid_r t$, Part $(3)$ implies $c\mid \mid t$. By transitivity, $s\mid\mid t$ follows.
\end{proof}

\begin{theorem}\label{thm:grcdmakesmagic}
    If $s$ is a total divisor of $t$ and $a\mid n$, then $\operatorname{gcrd}(s,X^n-1)$ is a total divisor of $\operatorname{gcrd}(t,X^n-1)$.
\end{theorem}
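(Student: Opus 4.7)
The plan is to use Lemma~\ref{thm:allb-secondcriteria} to reduce the claim $d_s \mid\mid d_t$ (where $d_s := \gcrd(s, X^n-1)$ and $d_t := \gcrd(t, X^n-1)$) to showing that $d_s$ both left- and right-divides the monic generator $g_I$ of the two-sided ideal $I := R d_t R$. The central observation is that in our setting $g_I$ must actually be central, at which point Proposition~\ref{thm:intermediate}(1) collapses the problem to a single right-divisibility that can be verified by the B\'ezout identity.

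First I would apply Corollary~\ref{thm:form-of-gI} to write $g_I = c X^k$ with $c \in Z(R)$ monic and $k \ge 0$. Since $d_t \mid_r (X^n-1)$ by definition, $X^n - 1 \in R d_t \subseteq R d_t R = R g_I$, so $g_I \mid_r (X^n-1)$. Because $X^n - 1$ has nonzero constant term, $X \nmid_r (X^n-1)$, forcing $k = 0$. Hence $g_I = c$ is central. By Proposition~\ref{thm:intermediate}(1), once $d_s \mid_r c$ is established, $d_s \mid\mid c$ follows at once, so in particular $d_s \mid_l c$; Lemma~\ref{thm:allb-secondcriteria} then yields $d_s \mid\mid d_t$.

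It remains to show $d_s \mid_r c$, i.e., $c \in R d_s$. Using the B\'ezout identity $d_t = u_t t + v_t(X^n-1)$ and writing $c = \sum_i r_i\, d_t\, r_i' \in I$, substitution together with the centrality of $X^n-1$ (which holds precisely because $a \mid n$) lets me move $X^n - 1$ past each $r_i'$ to obtain
\[ c = \sum_i r_i u_t t\, r_i' \;+\; \Bigl(\sum_i r_i v_t\, r_i'\Bigr)(X^n - 1). \]
The first sum lies in $R t R$. The hypothesis $s \mid\mid t$ tells me that $t g \in R s$ for every $g \in R$, hence $R t R \subseteq R s$; combined with $R s \subseteq R d_s$ (from $d_s \mid_r s$), the first sum lies in $R d_s$. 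The second sum lies in $R(X^n - 1) \subseteq R d_s$ since $d_s \mid_r (X^n - 1)$. Therefore $c \in R d_s$, as required.

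The main obstacle is establishing centrality of $g_I$: it relies on the arithmetic fact that $X^n - 1$ has nonzero constant term, so no positive power of $X$ can right-divide it, forcing the $X^k$ factor in $g_I$ to vanish. Without this reduction one would have to control the whole two-sided ideal $R d_t R$ directly and the short B\'ezout computation would not close the proof. The hypothesis $a \mid n$ enters exactly to put $X^n - 1$ in the center, which is what makes both the centrality conclusion for $g_I$ and the reshuffling inside the B\'ezout expansion legitimate.
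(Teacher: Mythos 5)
Your proof is correct, but it takes a genuinely different route from the paper's. The paper works with the two-sided ideal $I = RtR$ generated by $t$ itself, writes $g_I = cX^k$ via Corollary~\ref{thm:form-of-gI}, and shows that $\operatorname{gcrd}(g_I, X^n-1)$ lies in $Z(R)$ (it collapses to $\operatorname{gcrd}(c, X^n-1)$, a gcrd of two central elements). It then establishes the chain $\operatorname{gcrd}(s,X^n-1) \mid_r \operatorname{gcrd}(g_I,X^n-1) \mid_r \operatorname{gcrd}(t,X^n-1)$ simply by propagating the right-divisibilities $s \mid_r g_I$ (from Lemma~\ref{thm:allb-secondcriteria}) and $g_I \mid_r t$ through gcrd, and closes with Proposition~\ref{thm:intermediate}(3). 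You instead work with $I = R d_t R$ where $d_t = \operatorname{gcrd}(t, X^n-1)$, and exploit the extra leverage that $d_t \mid_r (X^n-1)$ to conclude $g_I \mid_r (X^n-1)$, which kills the $X^k$ factor entirely and makes $g_I$ itself central --- a clean structural observation the paper does not need. The price you pay is that the divisibility $d_s \mid_r g_I$ is no longer immediate; you recover it with a B\'ezout expansion of $d_t$ and a redistribution using the centrality of $X^n-1$, together with $RtR \subseteq Rs \subseteq Rd_s$ from $s \mid\mid t$. Both arguments are valid and lean on the same toolkit (Lemma~\ref{thm:allb-secondcriteria}, Corollary~\ref{thm:form-of-gI}, Proposition~\ref{thm:intermediate}, and the centrality of $X^n-1$ when $a \mid n$); the paper's version is slightly more economical because its divisibility chain falls out for free, while yours trades that economy for a sharper statement about the ideal $Rd_tR$.
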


\begin{proof}
    If $t=0$, then $\operatorname{gcrd}(t,X^n-1) = X^n-1 \in Z(R)$. Thus, we conclude $\operatorname{gcrd}(s,X^n-1) \mid_r \operatorname{gcrd}(t,X^n-1)$, and $\operatorname{gcrd}(s,X^n-1)$ is a total divisor of $\operatorname{gcrd}(t,X^n-1)$ by Proposition \ref{thm:intermediate}(2).

    It remains to consider the case $t\ne 0$.
    Suppose that $s\mid\mid t$ and let $I = RtR\ne 0$. Then, $s\mid_r g_I$ by Lemma \ref{thm:allb-secondcriteria}. Since $t\in I=Rg_I$, we have that $g_I\mid_r t$. Then, $s\mid_r g_I$ and $g_I\mid_r t$ imply that $\operatorname{gcrd}(s,X^n-1)\mid_r \operatorname{gcrd}(g_I,X^n-1)$ and $\operatorname{gcrd}(g_I,X^n-1)\mid_r \operatorname{gcrd}(t,X^n-1)$. If we can show that $\operatorname{gcrd}(g_I,X^n-1)$ belongs to $Z(R)$, then the desired conclusion follows from Proposition \ref{thm:intermediate}(4).

    To show that $\operatorname{gcrd}(g_I,X^n-1)\in Z(R)$, recall that $g_I= c X^k$ with $c\in Z(R)$ a monic polynomial and $k\ge 0$, see Corollary \ref{thm:form-of-gI}.  We have that $\operatorname{gcrd}(g_I,X^n-1)=\operatorname{gcrd}(cX^k,X^n-1)=\operatorname{gcrd}(c,X^n-1)$, since the right Euclidean algorithm for calculating $\operatorname{gcrd}(cX^k, X^n-1)$ performs completely in the subring $\F_q[X;\sigma] \simeq\F_q[X]$ of $R$. Furthermore, $\operatorname{gcrd}(c,X^n-1)\in Z(R)$ because $c, X^n-1\in Z(R)$ and the right Euclidean algorithm fully performs in the subring $Z(R)=\F_q[X^a;\sigma] \simeq \F_q[X^a]$ of $R$. Hence, $\operatorname{gcrd}(g_I,X^n-1)\in Z(R)$.
\end{proof}

\section{Structure of \texorpdfstring{$P_n$}{Pn}-Submodules of \texorpdfstring{$R_n^\ell$}{Rnl}}\label{sec:structure2}
Consider the field extensions $\F_q \leq \F_{q^a}\leq \F_{q^m}$ and $n$ a multiple of $m$.
Again, let
\[R_n = \F_{q^m}[X;\sigma]/\left( X^n-1\right),\]
\[ P_n=\F_{q^a}[X;\sigma]/\left( X^n-1\right),\]
where $\sigma$ is the Frobenius automorphism given by $\sigma(\alpha)=\alpha^q$, for $\alpha \in \F_{q^m}$. Let $\C$ be an $\F_{q^a}$-linear skew quasi-cyclic $\F_{q^m}$-code of length $n\ell$ and index $\ell$ with $\dim_{\F_{q^a}}(\C) = k$. We have that $\C$ is a $P_n$-submodule of $R_n^\ell$. This section is devoted to studying the $P_n$-module structure of  $\C$.

Let $\pi: \F_{q^m}[X;\sigma]^\ell\rightarrow  R_n^\ell$ denote the canonical projection, i.e., identifying coordinatewise with the remainder in $\F_{q^m}[X;\sigma]$ modulo $(X^n-1)$. Then $\pi^{-1}\left(\C \right)$, the preimage of $\C$ under $\pi$, is an $\F_{q^a}[X;\sigma]$-module. The code $\C$ has a generator matrix of the form

\[ G= \begin{pmatrix}
        \boldsymbol{r}_1 &
        \boldsymbol{r}_2 &
        \ldots           &
        \boldsymbol{r}_k
    \end{pmatrix}^t\]
with $\boldsymbol{r}_i = \left( \boldsymbol{\tilde{r}}_{i,1} + (X^n-1), \ldots, \boldsymbol{\tilde{r}}_{i,\ell} + (X^n-1)\right)\in R_n^\ell$ and  $\boldsymbol{\tilde{r}}_{i,\iota} \in \F_{q^m}[X;\sigma]$,
where the rows $\boldsymbol{r}_1 ,\boldsymbol{r}_2,\ldots, \boldsymbol{r}_k$ generate $\C$ as a $P_n$-module.

As in Section 5, let $b_1, \ldots, b_r$ be a normal basis of $\F_{q^m}$ over $\F_{q^a}$, where $r=\frac{m}{a}$. If we consider $\F_{q^m}[X;\sigma]$ as an $\F_{q^a}[X;\sigma]$-module, we have as in \eqref{Fqa-decomposition} that
\begin{equation*}\label{Fqa2-decomposition}
    \F_{q^m}[X;\sigma] = \bigoplus_{i=1}^{r} \F_{q^a}[X;\sigma] b_{i }   .
\end{equation*}
Under this decomposition, write
\[\boldsymbol{\tilde{ {r}}}_{i,\iota}=\left( r_{i,(\iota-1)r+1}, r_{i,(\iota-1)r+2}, \ldots, r_{i,\iota r} \right) \in \F_{q^a} [X;\sigma]^{r}. \]

Let $e_i$ be the element in $\F_{q^m}[X;\sigma]^\ell$ with entry $1$ as the $i$-th coordinate and entry $0$ in all the other coordinates. Notice that, as a left $\F_{q^a}[X;\sigma]$-module, the $\F_{q^m}[X;\sigma]$-submodule  $(X^n-1)\F_{q^m}[X;\sigma]^\ell$ of all scalar multiples by the scalar $X^n-1\in \F_{q^a}[X;\sigma]$
has the basis
\begin{align*}
     & (X^n-1)b_1e_1, \ldots, (X^n-1)b_re_1,       \\
     & (X^n-1)b_1e_2, \ldots, (X^n-1)b_re_2,       \\
     & \qquad \qquad \qquad \ \ \vdots             \\
     & (X^n-1)b_1e_\ell, \ldots, (X^n-1)b_re_\ell.
\end{align*}
Then, \[\pi^{-1}(\C)= \left\langle  \left( r_{i,1}, r_{i,2}, \ldots, r_{i,r\ell} \right) : 1\le i\le k \right\rangle +(X^n-1)\F_{q^a}[X;\sigma]^{r\ell}\] as an $\F_{q^a}[X;\sigma]$-module, and the rows of
\begin{equation}\label{Smith1}
    M= \begin{pmatrix}
        r_{1,1} & r_{1,2} & \cdots & r_{1,r\ell} \\
        r_{2,1} & r_{2,2} & \cdots & r_{2,r\ell} \\
        \vdots  & \vdots  & \ddots & \vdots      \\
        r_{k,1} & r_{k,2} & \cdots & r_{k,r\ell} \\ \hline
        X^n-1   & 0       & \cdots & 0           \\
        0       & X^n-1   & \cdots & 0           \\
        \vdots  & \vdots  & \ddots & \vdots      \\
        0       & 0       & \cdots & X^n-1
    \end{pmatrix}
\end{equation}
generate $\pi^{-1}(\C)$ as an $\F_{q^a}[X;\sigma]$-module. As $\F_{q^m}[X;\sigma]$ is a free module over the one-sided Euclidean ring $\F_{q^a}[X;\sigma]$, $\pi^{-1}(\C)\subseteq \F_{q^m}[X;\sigma]^\ell$ is a free $\F_{q^a}[X;\sigma]$-module.

\begin{remark}
    From here, one may apply the double Smith normal form approach from Section \ref{sec:structure1} to the $\F_{q^a}[X;\sigma]$-modules $D\leq C\leq \F_{q^m}[X;\sigma]^\ell$ with $C=\pi^{-1}(\C)$ and $D=(X^n-1)\F_{q^m}[X;\sigma]^\ell$ to obtain the $P_n$-module structure of  $\C \simeq C/D$. In the following, we will present a computationally more efficient alternative method that requires only one application of the Smith normal form.
\end{remark}

We start with the Smith normal form $J$ of the matrix $M$ which provides us with invertible matrices $S$ and $T$ such that
\begin{equation} \label{Smith2}
    J := SMT =\begin{pmatrix}
        h_1    & 0      & \cdots & 0         \\
        0      & h_2    & \cdots & 0         \\
        \vdots & \vdots & \ddots & \vdots    \\
        0      & 0      & \cdots & h_{r\ell} \\
        0      & 0      & \cdots & 0         \\
        \vdots & \vdots & \ddots & \vdots    \\
        0      & 0      & \cdots & 0         \\
    \end{pmatrix}.
\end{equation}

Note that the polynomials $h_i$ must all be nonzero as the matrix $M$ has full rank. In fact, we have the following much stronger general division result.
\begin{proposition}
    The polynomials $h_i$ on the main diagonal of $J$ are total divisors of $X^n-1$ in $\F_{q^a}[X;\sigma]$. In particular, $h_i d_i = d_i h_i = X^n-1$ for suitable $d_i\in \F_{q^a}[X;\sigma]$.
\end{proposition}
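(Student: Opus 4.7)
My plan is to read the total divisibility directly off the Smith normal form and then promote right-divisibility to total divisibility using the centrality of $X^n-1$. The key preliminary observation is that the lower $r\ell$ rows of $M$ in \eqref{Smith1} are literally $(X^n-1)e_1,\ldots,(X^n-1)e_{r\ell}$, so the $\F_{q^a}[X;\sigma]$-submodule $N\subseteq \F_{q^a}[X;\sigma]^{r\ell}$ generated by the rows of $M$ automatically contains $(X^n-1)\F_{q^a}[X;\sigma]^{r\ell}$.

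Next I would pass through the Smith normal form $SMT=J$ of \eqref{Smith2}. The change-of-basis matrix $T$ supplies the stacked basis $\bs{q}_1,\ldots,\bs{q}_{r\ell}$ of $\F_{q^a}[X;\sigma]^{r\ell}$ (one may take the rows of $T^{-1}$) in which
\[ N = \bigoplus_{i=1}^{r\ell} \F_{q^a}[X;\sigma]\, h_i\bs{q}_i. \]
The inclusion $(X^n-1)\bs{q}_i\in N$ together with uniqueness of coordinates in the basis $\{\bs{q}_j\}$ then forces $X^n-1 = d_i h_i$ for a unique $d_i\in\F_{q^a}[X;\sigma]$, i.e., $h_i\mid_r (X^n-1)$. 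Since $a\mid m\mid n$, the polynomial $X^n-1$ lies in $Z(R)=\F_q[X^a;\sigma]$, and Proposition \ref{thm:intermediate}(1) with $s=h_i$, $t=X^n-1$ immediately upgrades this to $h_i \mid \mid (X^n-1)$.

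For the ``$h_i d_i = d_i h_i = X^n-1$'' postscript, I would multiply the identity $d_i h_i = X^n-1$ on the left by $h_i$ and use centrality of $X^n-1$ to obtain $h_i d_i h_i = (X^n-1)h_i = d_i h_i \cdot h_i$; cancelling $h_i$ on the right in the integral domain $R$ yields $h_i d_i = d_i h_i$, and both are equal to $X^n-1$. I do not anticipate a serious obstacle; the only step requiring genuine care is faithfully tracking the Smith normal form conventions, so that the free decomposition $N = \bigoplus \F_{q^a}[X;\sigma]\, h_i\bs{q}_i$ (and hence the inclusion $(X^n-1)\bs{q}_i\in N$) is set up correctly. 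Once that is in place, the total divisor conclusion follows in essentially one line from Proposition \ref{thm:intermediate}.
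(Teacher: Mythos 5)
Your proof is correct, and it takes a genuinely different and leaner route than the paper's. Where you argue directly, the paper first computes the Smith normal form of the top block $A$ alone (getting invariant factors $a_1 \mid\mid \cdots \mid\mid a_\kappa$), then stacks $\overline{J}$ over $D=(X^n-1)\operatorname{Id}_{r\ell}$ and performs Euclidean row reduction column-by-column to arrive at a diagonal matrix with entries $\operatorname{gcrd}(a_i,X^n-1)$ and $X^n-1$; to certify that this diagonal matrix really is the Smith normal form of $M$, the paper must verify the total-divisor chain, and this is precisely where Theorem~\ref{thm:grcdmakesmagic} (the ``deep result'' that Section~\ref{sec:totaldivisors} was built to prove) is invoked. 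Your approach sidesteps all of that: from $SMT=J$ you read off $N=\bigoplus_i \F_{q^a}[X;\sigma]\,h_i\boldsymbol{q}_i$, note that the bottom rows of $M$ already put $(X^n-1)\F_{q^a}[X;\sigma]^{r\ell}\subseteq N$ (using centrality of $X^n-1$ to pull it through coordinates), extract $h_i\mid_r (X^n-1)$ by uniqueness of coordinates in the basis $\{\boldsymbol{q}_j\}$, and finish with Proposition~\ref{thm:intermediate}(1). That argument needs none of the $\operatorname{gcrd}$ machinery and not even the fact that $a_i\mid\mid a_{i+1}$. The trade-off is information: the paper's longer proof identifies the invariant factors explicitly as $h_i=\operatorname{gcrd}(a_i,X^n-1)$ (padded by copies of $X^n-1$), whereas yours only proves the divisibility statement of the proposition. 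Since the paper apparently does not use that explicit identification downstream, your shorter argument suffices for the stated proposition; but it does not replace Theorem~\ref{thm:grcdmakesmagic} as a structural result. Your postscript argument ($h_i d_i h_i = (X^n-1)h_i = d_i h_i h_i$, then right-cancel $h_i$ in the domain $R$) is also fine and matches the paper's reasoning in spirit.
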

\begin{proof}
    Denote
    \[A:= \begin{pmatrix}
            r_{1,1} & r_{1,2} & \cdots & r_{1,r\ell} \\
            r_{2,1} & r_{2,2} & \cdots & r_{2,r\ell} \\
            \vdots  & \vdots  & \ddots & \vdots      \\
            r_{k,1} & r_{k,2} & \cdots & r_{k,r\ell}
        \end{pmatrix} \quad \mbox{ and } \quad D := (X^n-1) \operatorname{Id}_{r\ell}. \]
    The Smith normal form of $A$ provides us with invertible matrices  $\overline{S}$ and $\overline{T}$ such that
    \[ \overline{J} :=\overline{S}A\overline{T}= \begin{pmatrix}
            a_1    & \cdots & 0        & 0      & \cdots & 0      \\
            \vdots & \ddots & \vdots   & \vdots & \cdots & \vdots \\
            0      & \cdots & a_\kappa & 0      & \cdots & 0      \\
            0      & \cdots & 0        & 0      & \cdots & 0      \\
            \vdots & \vdots & \vdots   & \vdots & \ddots & \vdots \\
        \end{pmatrix}  ,\]
    where $a_1, a_2, \ldots, a_\kappa$ with $\kappa \le r\ell$ are the nonzero diagonal entries of $\overline{J}$ with $a_i$ a total divisor of $a_{i+1}$ for all $i=1, \ldots, \kappa-1$. Then,
    \begin{align*}   & \quad\,\,
               \begin{pNiceArray}{c|c}
                       \overline{S} & \mathbf{0} \\
                       \hline\\[-4mm]
                       \mathbf{0} & \overline{T}^{-1} \\
                   \end{pNiceArray} \begin{pmatrix}
                                        A \\ \hline
                                        D
                                    \end{pmatrix}
               \overline{T} = \begin{pmatrix}
                                      \overline{S}A\overline{T} \\ \hline\\[-4mm]
                                      \overline{T}^{-1}D\overline{T}
                                  \end{pmatrix}  = \begin{pmatrix}
                                                       \overline{J} \\ \hline
                                                       D
                                                   \end{pmatrix}          \\
               \\
               = & \begin{pmatrix}
                           a_1    & \cdots & 0        & 0      & \cdots & 0      \\
                           \vdots & \ddots & \vdots   & \vdots & \cdots & \vdots \\
                           0      & \cdots & a_\kappa & 0      & \cdots & 0      \\
                           0      & \cdots & 0        & 0      & \cdots & 0      \\
                           \vdots & \vdots & \vdots   & \vdots & \ddots & \vdots \\ \hline
                           X^n-1  & \cdots & 0        & 0      & \cdots & 0      \\
                           \vdots & \ddots & \vdots   & \vdots & \cdots & \vdots \\
                           0      & \cdots & X^n-1    & 0      & \cdots & 0      \\
                           0      & \cdots & 0        & X^n-1  & \cdots & 0      \\
                           \vdots & \vdots & \vdots   & \vdots & \ddots & \vdots \\
                           0      & \cdots & 0        & 0      & \cdots & X^n-1
                       \end{pmatrix}.
    \end{align*}
    Applying row operations to the rows of this matrix, we can implement the Euclidean algorithm to find the greatest common right divisor of the respective entries $a_i$ and $X^n-1$ in the first $\kappa$ columns of this matrix. Row swaps then result in the following equivalent matrix:

    \begin{align}\label{snf:6.1}
        \begin{pmatrix}
            \operatorname{gcrd}(a_1, X^n-1) & \cdots & 0                                    & 0      & \cdots & 0      \\
            \vdots                          & \ddots & \vdots                               & \vdots & \cdots & \vdots \\
            0                               & \cdots & \operatorname{gcrd}(a_\kappa, X^n-1) & 0      & \cdots & 0      \\
            0                               & \cdots & 0                                    & X^n-1  & \cdots & 0      \\
            \vdots                          & \vdots & \vdots                               & \vdots & \ddots & \vdots \\
            0                               & \cdots & 0                                    & 0      & \cdots & X^n-1  \\
            0                               & \cdots & 0                                    & 0      & \cdots & 0      \\
            \vdots                          & \vdots & \vdots                               & \vdots & \vdots & \vdots \\
        \end{pmatrix}
    \end{align}
    Recall that $a_i \mid\mid a_{i+1}$ for all $i=1, \ldots, \kappa-1$. By Theorem~ \ref{thm:grcdmakesmagic}, $\operatorname{gcrd}(a_i, X^n-1)$ is a total divisor of $\operatorname{gcrd}(a_{i+1}, X^n-1)$, while $\operatorname{gcrd}(a_\kappa, X^n-1)\mid \mid X^n-1$ with Proposition \ref{thm:intermediate}(2) and $X^n-1\mid \mid X^n-1$ with Lemma \ref{lem:selftotaldivisor}. Thus, \eqref{snf:6.1} is the Smith normal form of $M$. Then, $h_i$ is either $\operatorname{gcrd}(a_{i}, X^n-1)$ or $X^n-1$. In either case, $h_{i}$ is a total divisor of $X^n-1$ by Proposition \ref{thm:intermediate}(2) and Lemma \ref{lem:selftotaldivisor}, respectively.

    Let $d_i\in \F_{q^a}[X;\sigma]$ with $h_i d_i = X^n-1$. Then $X^n-1= h_{i}d_{i}=d_{i}h_{i}$ since $X^n-1\in Z\left( \F_{q^a}[X;\sigma] \right)$, cf. Proposition \ref{thm:intermediate}(1).
\end{proof}

Let $\boldsymbol{q}_i$ denote the rows of $T^{-1}$ in \eqref{Smith2}. Let $\boldsymbol{c}_i := h_i \boldsymbol{q}_i$. By the Smith normal form we have that $\boldsymbol{q}_1, \ldots, \boldsymbol{q}_{r\ell}$ is a basis of $\F_{q^a}[X;\sigma]^{r\ell}$ and that $\boldsymbol{c}_1, \ldots, \boldsymbol{c}_{r\ell}$ is a basis of $\pi^{-1}(\C)$.

Let $d_i$ be the right quotient of $X^n-1$ by $h_i$. Since $X^n-1 \in Z\left( \F_{q^a}[X;\sigma] \right)$, by Proposition \ref{thm:intermediate}(1), we have that $h_i d_i = d_i h_i = X^n-1$. Moreover, $d_i \boldsymbol{c}_i = d_i h_i\boldsymbol{q}_i =(X^n-1)\boldsymbol{q}_i$, and $d_1 \boldsymbol{c}_1, \ldots, d_{r\ell} \boldsymbol{c}_{r\ell}$ is a basis of $(X^n-1)\F_{q^a}[X;\sigma]^{r\ell}$.

Observing that the resulting quotients $\F_{q^a} [X;\sigma] /  \F_{q^a} [X;\sigma]d_i$ naturally inherit the $P_n$-module structure of $\mathcal C$  and $R_n^\ell$, we have shown the main theorem of this article.

\begin{theorem}\label{thm:main_theorem}
    Let $\C$ be an $\F_{q^a}$-linear skew quasi-cyclic $\F_{q^m}$-code of length $n\ell$ and index $\ell$. Then, both as an $\F_{q^a}[X;\sigma]$-module and as a $P_n$-module,
    \begin{align*}
        \C & \simeq \pi^{-1}(\C) \Big/ \left((X^n-1)\F_{q^a}[X;\sigma]^{r\ell} \right)                                                                                      \\
           & =\left(\bigoplus_{i=1}^{r\ell} \F_{q^a} [X;\sigma]\boldsymbol{c}_i \right) \Big/ \left(\bigoplus_{i=1}^{r\ell}  \F_{q^a} [X;\sigma]d_i\boldsymbol{c}_i \right) \\
           & \simeq \bigoplus_{i=1}^{r\ell} \left(  \F_{q^a} [X;\sigma] \Big/  \F_{q^a} [X;\sigma]d_i  \right).
    \end{align*}
\end{theorem}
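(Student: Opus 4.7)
The plan is to apply the first isomorphism theorem and then leverage the stacked basis produced by the Smith normal form computation preceding the theorem statement to break the resulting quotient into a direct sum of cyclic modules. All the heavy lifting has essentially been front-loaded: the Smith normal form already supplies bases $\boldsymbol{q}_1, \ldots, \boldsymbol{q}_{r\ell}$ of $\F_{q^a}[X;\sigma]^{r\ell}$ with $\boldsymbol{c}_i = h_i\boldsymbol{q}_i$ forming a basis of $\pi^{-1}(\C)$ and $d_i \boldsymbol{c}_i = (X^n-1)\boldsymbol{q}_i$ forming a basis of $(X^n-1)\F_{q^a}[X;\sigma]^{r\ell}$.

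For the first isomorphism, I would restrict the projection $\pi\colon \F_{q^m}[X;\sigma]^\ell \to R_n^\ell$ to $\pi^{-1}(\C)$, obtaining a surjective $\F_{q^a}[X;\sigma]$-module homomorphism onto $\C$. Its kernel is $(X^n-1)\F_{q^m}[X;\sigma]^\ell$, which under the normal basis identification $\F_{q^m}[X;\sigma]^\ell \simeq \F_{q^a}[X;\sigma]^{r\ell}$ becomes $(X^n-1)\F_{q^a}[X;\sigma]^{r\ell}$; this transfer is legitimate precisely because $X^n-1$ is central, so the ideal is stable under the basis change. The first isomorphism theorem then yields the first $\simeq$ in the theorem.

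For the middle equality, I would substitute the explicit bases: $\pi^{-1}(\C) = \bigoplus_{i=1}^{r\ell}\F_{q^a}[X;\sigma]\boldsymbol{c}_i$ and $(X^n-1)\F_{q^a}[X;\sigma]^{r\ell} = \bigoplus_{i=1}^{r\ell}\F_{q^a}[X;\sigma]d_i\boldsymbol{c}_i$, where the second identity uses $d_i\boldsymbol{c}_i = (X^n-1)\boldsymbol{q}_i$ and the fact that the $\boldsymbol{q}_i$ form a basis of the ambient free module. Then I would invoke the standard isomorphism $\bigl(\bigoplus M_i\bigr)/\bigl(\bigoplus N_i\bigr) \simeq \bigoplus (M_i/N_i)$ for $N_i \le M_i$, reducing to $\bigoplus_i \F_{q^a}[X;\sigma]\boldsymbol{c}_i / \F_{q^a}[X;\sigma]d_i\boldsymbol{c}_i$. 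Each summand is isomorphic to $\F_{q^a}[X;\sigma]/\F_{q^a}[X;\sigma]d_i$ via $f\boldsymbol{c}_i + \F_{q^a}[X;\sigma]d_i\boldsymbol{c}_i \mapsto f + \F_{q^a}[X;\sigma]d_i$, which is well-defined and injective because $\boldsymbol{c}_i$ is part of a free basis.

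Finally, to promote the chain of isomorphisms from $\F_{q^a}[X;\sigma]$-modules to $P_n$-modules, I would observe that $P_n = \F_{q^a}[X;\sigma]/(X^n-1)$ acts on each term by reducing the $\F_{q^a}[X;\sigma]$-action modulo $X^n-1$, and that $X^n-1$ acts as zero on every module in sight: trivially on $\C \subseteq R_n^\ell$, and on $\F_{q^a}[X;\sigma]/\F_{q^a}[X;\sigma]d_i$ because $X^n-1 = h_i d_i \in \F_{q^a}[X;\sigma]d_i$ by the previous proposition. The main conceptual point to get right is that the $P_n$-structure on $\C$ matches the descended $\F_{q^a}[X;\sigma]$-structure on $\pi^{-1}(\C)/\ker\pi$, which is automatic because $\pi$ is $\F_{q^a}[X;\sigma]$-linear and intertwines the $X$-action. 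Beyond this, the only mild subtlety is justifying that $\pi^{-1}(\C)$ is free of rank exactly $r\ell$, which holds because it contains the full-rank submodule $(X^n-1)\F_{q^a}[X;\sigma]^{r\ell}$ and is contained in the free module $\F_{q^a}[X;\sigma]^{r\ell}$, matching the $r\ell$ nonzero diagonal entries in the Smith normal form of $M$.
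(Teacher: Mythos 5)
Your proof is correct and takes essentially the same route the paper intends: the paper does not print a separate proof of Theorem~\ref{thm:main_theorem} but derives it from the preceding discussion, namely the first isomorphism theorem applied to $\pi|_{\pi^{-1}(\C)}$, the identification of the kernel $(X^n-1)\F_{q^m}[X;\sigma]^\ell$ with $(X^n-1)\F_{q^a}[X;\sigma]^{r\ell}$ via the normal basis decomposition (your appeal to centrality of $X^n-1$ is exactly what makes this identification respect the ideal), and the stacked basis $\boldsymbol{c}_i = h_i\boldsymbol{q}_i$, $d_i\boldsymbol{c}_i = (X^n-1)\boldsymbol{q}_i$ produced by the Smith normal form. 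Your explicit verification that $X^n-1$ annihilates each summand, so the $\F_{q^a}[X;\sigma]$-isomorphism descends to a $P_n$-isomorphism, is the right way to justify the ``both as an $\F_{q^a}[X;\sigma]$-module and as a $P_n$-module'' clause that the paper leaves implicit.
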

\begin{example}
    Let $q=2$, $a=1$, $n=m=4$. $\F_{16}$ has $\{\alpha^3, \alpha^6, \alpha^{12}, \alpha^{9}\}$ as basis over $\F_2$, where $\alpha$ is a primitive element that generates the multiplicative group $\F_{16}\setminus \{0\}$. With $\sigma(a)=a^2$, we have $\F_2[X;\sigma]= \F_2[X]$. Let $\C= P_4\cdot \left( \alpha^8 + \alpha^2X + \alpha^8 X^2 + \alpha^2 X^3\right)$, an $\F_2$-linear skew cyclic code. The generator of $\C$ can be expressed as
    \begin{align*}
        \alpha^8 + \alpha^2X + \alpha^8 X^2 + \alpha^2 X^3
        =(X+X^2)\alpha^3 + (X^2+X^3)\alpha^6 +(1+X^3)\alpha^{12} +(1+X)\alpha^9.
    \end{align*}
    As an $\F_2[X]$-module, $\pi^{-1}(\C)$ is generated by the rows of
    \[M= \begin{pmatrix}
            \scriptstyle{X+X^2} & \scriptstyle{X^2+X^3} & \scriptstyle{1+X^3} & \scriptstyle{1+X}   \\
            \scriptstyle{X^4-1} & 0                     & 0                   & 0                   \\
            0                   & \scriptstyle{X^4-1}   & 0                   & 0                   \\
            0                   & 0                     & \scriptstyle{X^4-1} & 0                   \\
            0                   & 0                     & 0                   & \scriptstyle{X^4-1} \\
        \end{pmatrix}.\]

    The Smith normal form provides us with invertible matrices $S$ and $T$ such that
    \[SMT = J= \begin{pmatrix}
            \scriptstyle{1+X} & 0                   & 0                   & 0                   \\
            0                 & \scriptstyle{X^4-1} & 0                   & 0                   \\
            0                 & 0                   & \scriptstyle{X^4-1} & 0                   \\
            0                 & 0                   & 0                   & \scriptstyle{X^4-1} \\
            0                 & 0                   & 0                   & 0
        \end{pmatrix}. \]

    That is, $h_1=1+X$, $h_2=h_3=h_4=X^4-1$. Therefore, $d_1=\frac{X^4-1}{X+1}=X^3 + X^2 + X + 1$, $d_2=d_3=d_4=1$ and
    \[ \C\simeq  \F_2 [X] /\, \F_2 [X] \left( X^3 + X^2 + X + 1 \right).  \]
\end{example}
Given the above decomposition, the natural question of an induced dual code for $\C$ arises. For this, we introduce an inner product associated with the basis vectors $\boldsymbol{q}_i$.

\begin{definition}
    Recall
    \[ T^{-1}= \begin{pmatrix}
            \boldsymbol{q}_1 \\
            \boldsymbol{q}_2 \\
            \vdots           \\
            \boldsymbol{q}_{r\ell}
        \end{pmatrix}.\]
    For row vectors $\boldsymbol{a},\boldsymbol{b}\in \F_{q^a}[X;\sigma]^{r\ell}$, define their \emph{$T$-inner product} by
    \[\langle\boldsymbol{a}, \boldsymbol{b}\rangle_T:= \boldsymbol{a} T T^T  \boldsymbol{b}^T =\langle \boldsymbol{a}T, \boldsymbol{b}T\rangle. \]

    This is a symmetric nondegenerate bilinear inner product. Furthermore, this inner product has good properties with respect to the dual code defined below.

    Similarly, for $\boldsymbol{\overline{a}}, \boldsymbol{\overline{b}}\in  P_n^{r\ell}$, define their \emph{$T$-inner product} by $\left\langle\boldsymbol{\overline{a}}, \boldsymbol{\overline{b}}\right\rangle_T:= \langle\boldsymbol{a}, \boldsymbol{b}\rangle_T+(X^n-1)\in P_n$, where $\pi: \F_{q^a}[X;\sigma]^{r\ell} \to P_n^{r\ell}$ denotes the canonical projection, and
    $\boldsymbol{\overline{a}}=\pi(\bs{a}), \boldsymbol{\overline{b}}=\pi(\bs{b})$ for some $\bs{a}, \bs{b}\in \F_{q^a}[X;\sigma]^{r\ell}$.
\end{definition}

\begin{definition} \label{star}
    Let $\C$ be an $\F_{q^a}$-linear skew quasi-cyclic $\F_{q^m}$-code of length $n\ell$ and index $\ell$ with decomposition as in Theorem \ref{thm:main_theorem},
    \begin{align}\label{codedecomposition}
        \C \simeq \left(\bigoplus_{i=1}^{r\ell} \F_{q^a} [X;\sigma]\boldsymbol{c}_i  \right) \Big/ \left(\bigoplus_{i=1}^{r\ell}  \F_{q^a} [X;\sigma]d_i\boldsymbol{c}_i  \right).
    \end{align}

    Define the \emph{$T$-dual of $\C$} (with respect to $\boldsymbol{q}_1, \ldots, \boldsymbol{q}_{r\ell}$) by
    \[\C_T^\perp := \left\langle \boldsymbol{c}_1',
        \ldots, \boldsymbol{c}_{r\ell}'  \right\rangle_{\F_{q^a}[X;\sigma]} +(X^n-1)\F_{q^a}[X;\sigma]^{r\ell}, \]
    where $\boldsymbol{c}_i = h_i \boldsymbol{q}_i$, $h_i d_i = d_i h_i = X^n-1$, and $\boldsymbol{c}_i' := d_i \boldsymbol{q}_i$.
    Note that the $\boldsymbol{c}_i'$ are linearly independent over $\F_{q^a}[X;\sigma]$ since the $\boldsymbol{q}_i$ form a basis for $\F_{q^a}[X;\sigma]^{r\ell}$.
\end{definition}

This dual has desirable properties. In particular, it allows us to recover the original code if we keep track of the basis elements $\boldsymbol{q}_i$.

\begin{theorem}\label{thm:gen_dual}
    Let $\C$ be an $\F_{q^a}$-linear skew quasi-cyclic $\F_{q^m}$-code of length $n\ell$ and index $\ell$, and suppose that $\C_{T}^\perp$ is defined with respect to the decomposition \eqref{codedecomposition} as above.
    \begin{enumerate}
        \item[$(1)$] If $\boldsymbol{\overline{x}}\in \C$ and $\boldsymbol{\overline{y}}\in  \C_T^\perp$, then $\left\langle\boldsymbol{\overline{x}}, \boldsymbol{\overline{y}}\right\rangle_T=0_{P_n}$.
        \item[$(2)$] When constructed with respect to the same basis $\{\boldsymbol{q}_1, \ldots, \boldsymbol{q}_{r\ell} \}$, we have $\left( \C_T^\perp \right)_T^\perp =\C.$
    \end{enumerate}
\end{theorem}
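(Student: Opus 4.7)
The strategy is to exploit the ``orthonormality'' of the stacked basis $\boldsymbol{q}_1, \ldots, \boldsymbol{q}_{r\ell}$ with respect to $\langle \cdot, \cdot\rangle_Q$: since $\boldsymbol{q}_i Q^{-1}$ is the $i$-th standard basis row, one has $\langle \boldsymbol{q}_i, \boldsymbol{q}_j\rangle_Q = \delta_{ij}$, and the $Q$-inner product collapses to the Euclidean dot product of $\boldsymbol{q}$-coordinates. This observation drives both parts of the proof.

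For part $(1)$, I would lift $\overline{\boldsymbol{c}} \in \C$ to $\boldsymbol{c} = \sum_i f_i h_i \boldsymbol{q}_i \in \pi^{-1}(\C) = \bigoplus_i \F_{q^a}[X;\sigma]\, h_i \boldsymbol{q}_i$ and $\overline{\boldsymbol{x}} \in \C_*^\perp$ to $\boldsymbol{x} = \sum_j g_j d_j \boldsymbol{q}_j$; the $(X^n-1)\F_{q^a}[X;\sigma]^{r\ell}$ summand in the definition of $\C_*^\perp$ is absorbed via $(X^n-1)\boldsymbol{q}_j = h_j d_j \boldsymbol{q}_j \in \F_{q^a}[X;\sigma]\cdot d_j \boldsymbol{q}_j$, using that $X^n-1$ is central. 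Orthonormality collapses the inner product to the diagonal:
\[
\langle \boldsymbol{c}, \boldsymbol{x}\rangle_Q \;=\; \sum_{i=1}^{r\ell} (f_i h_i)(g_i d_i).
\]
Each diagonal term is a multiple of $h_i d_i = X^n - 1$: in the commutative case $a = 1$ the product simply rearranges as $f_i h_i g_i d_i = f_i g_i (X^n - 1)$, and in the general case one invokes the total-divisor properties of $h_i$ and $d_i$ (from the proposition preceding Theorem~\ref{thm:main_theorem}) together with Theorem~\ref{thm:grcdmakesmagic} and Proposition~\ref{thm:intermediate} to deduce that $h_i \cdot g \cdot d_i \in (X^n-1)\F_{q^a}[X;\sigma]$ for every $g \in \F_{q^a}[X;\sigma]$. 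Either way $\langle \boldsymbol{c}, \boldsymbol{x}\rangle_Q$ is a multiple of $X^n-1$, so $\langle \overline{\boldsymbol{c}}, \overline{\boldsymbol{x}}\rangle_Q = 0_{P_n}$.

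For part $(2)$, I would observe that with the basis $\{\boldsymbol{q}_i\}$ held fixed, $\pi^{-1}(\C_*^\perp) = \bigoplus_i \F_{q^a}[X;\sigma]\, d_i \boldsymbol{q}_i$ is already diagonally presented in this basis, now with ``diagonal entries'' $d_i$. The complementary factors satisfying $d_i \cdot (\text{?}) = X^n - 1$ are precisely $h_i$, so Definition~\ref{star} applied to $\C_*^\perp$ with the same stacked basis produces generators $h_i \boldsymbol{q}_i = \boldsymbol{c}_i$. Hence
\[
(\C_*^\perp)_*^\perp \;=\; \left\langle \boldsymbol{c}_1, \ldots, \boldsymbol{c}_{r\ell}\right\rangle_{\F_{q^a}[X;\sigma]} + (X^n-1)\F_{q^a}[X;\sigma]^{r\ell} \;=\; \pi^{-1}(\C),
\]
which yields $(\C_*^\perp)_*^\perp = \C$ as $P_n$-submodules.

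The main obstacle is the non-commutative case of part $(1)$: the orthonormality argument shows $\langle \boldsymbol{c}, \boldsymbol{x}\rangle_Q = \sum (f_i h_i)(g_i d_i)$ cleanly, but certifying that each $h_i g d_i$ lands in $(X^n-1)\F_{q^a}[X;\sigma]$ does not follow from $h_i d_i = X^n-1$ alone and must be bootstrapped from the total-divisor theory developed in Section~\ref{sec:totaldivisors}. Part $(2)$, by contrast, is essentially a symmetry: once the stacked basis is fixed, $\C$ and $\C_*^\perp$ mirror each other by exchanging $h_i \leftrightarrow d_i$, and a second application of the $*$-dual returns to the original code.
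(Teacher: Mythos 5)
Your treatment of part $(2)$ is correct and tracks the paper's own proof exactly: with the stacked basis $\{\boldsymbol{q}_i\}$ held fixed, passing from $\C$ to $\C_*^\perp$ simply exchanges $h_i \leftrightarrow d_i$, and a second application of Definition~\ref{star} exchanges them back.

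For part $(1)$, your computation $\langle \boldsymbol{c}, \boldsymbol{x}\rangle_Q = \sum_i (f_i h_i)(g_i d_i)$ is the right reduction, and you are also right to flag the noncommutative step as the crux. But the claim that $h_i g d_i \in (X^n-1)\F_{q^a}[X;\sigma]$ can be ``bootstrapped from the total-divisor theory of Section~\ref{sec:totaldivisors}'' does not hold. Knowing $h_i \mid\mid X^n-1$ and $h_i d_i = d_i h_i = X^n-1$ does \emph{not} imply $h_i g d_i \in (X^n-1)\F_{q^a}[X;\sigma]$ for all $g$; after left-cancelling $h_i$, that condition is precisely $\F_{q^a}[X;\sigma]\, d_i \subseteq d_i\, \F_{q^a}[X;\sigma]$, i.e.\ that $d_i$ is an invariant element, and total divisibility of $h_i$ by a central polynomial does not force this. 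A concrete obstruction: take $a = m = 2$, $q = 2$, $n = 2$, $\ell = 1$, $R = \F_4[X;\sigma]$ with $\sigma$ the Frobenius. Then $X^2 - 1 = (X+1)(X+1)$ and $h_1 = d_1 = X+1$ is a total divisor of $X^2 - 1$ by Proposition~\ref{thm:intermediate}$(1)$, yet for a primitive $\alpha \in \F_4$ one has $(X+1)\alpha(X+1) = \alpha^2 X^2 + X + \alpha \equiv X + 1 \not\equiv 0 \pmod{X^2 - 1}$. So the identity you need on the diagonal terms genuinely fails, and your argument (like the paper's, which invokes ``bilinearity'' even though the $Q$-inner product is only left-$\F_{q^a}[X;\sigma]$-linear in the first slot and is not $\F_{q^a}[X;\sigma]$- or even $\F_{q^a}$-bilinear in the second when $\sigma \neq \operatorname{id}$) does not close. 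Filling this gap requires either an additional hypothesis forcing the $h_i$ (equivalently $d_i$) to be invariant or central, or a different, non-bilinear definition of the pairing; the total-divisor machinery by itself is insufficient.
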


\begin{proof}
    ${(1)}$ We have that $\bs{q}_i T=\bs{e}_i$, where $\bs{e}_i$ is the row vector with entry 1 in the $i$-th position, and entries $0$ elsewhere.
    Let $\boldsymbol{\overline{c}}_i= \bs{c}_i+(X^n-1)\F_{q^a}[X;\sigma]^{r\ell}$ and $\boldsymbol{\overline{c}}_j'= \boldsymbol{c}_j'+(X^n-1)\F_{q^a}[X;\sigma]^{r\ell}$.
    Then for all $i,j\in \{1,\ldots, r\ell\}$,
    \begin{align*}
        \left\langle\boldsymbol{\overline{c}}_i,\boldsymbol{\overline{c}}_j'\right\rangle_T & = \langle\boldsymbol{c}_i, \boldsymbol{c}_j'\rangle_T + (X^n-1)      \\
                                                                                            & = \langle h_i\boldsymbol{q}_iT, d_j\boldsymbol{q}_jT\rangle +(X^n-1) \\
                                                                                            & = \langle h_i \bs{e}_i, d_j \bs{e}_j\rangle +(X^n-1)                 \\
                                                                                            & = h_i d_j \delta_{ij} +(X^n-1)                                       \\
                                                                                            & =(X^n-1)\delta_{ij} +(X^n-1) = 0_{P_n}.
    \end{align*}
    Since $\C =\langle \boldsymbol{\overline{c}}_1, \ldots, \boldsymbol{\overline{c}}_{r\ell}\rangle_{\F_{q^a}[X;\sigma]}$ and $\C_T^\perp =\langle \boldsymbol{\overline{c}}_1', \ldots, \boldsymbol{\overline{c}}_{r\ell}'\rangle_{\F_{q^a}[X;\sigma]}$, the result follows by bilinearity of the inner product.

    ${(2)}$ Let $i\in \{1, \ldots, r\ell\}$. Recall that $\boldsymbol{c}_i = h_i \boldsymbol{q}_i$ and $\boldsymbol{c}_i' = d_i \boldsymbol{q}_i$. To construct $\left( \C_T^\perp \right)_T^\perp$,
    Definition \ref{star} instructs us to decompose $\boldsymbol{c}_i' = d_i \boldsymbol{q}_i$ to
    define new basis vectors $\boldsymbol{c}_i'' := h_i \boldsymbol{q}_i= \boldsymbol{c}_i$ for $\left( \C_T^\perp \right)_T^\perp$.
    Hence $\left( \C_T^\perp \right)_T^\perp$ and $\C$ have the same basis.
\end{proof}

\section{Conclusion and Future Work}\label{sec:future}
This article introduces \emph{nonlinear skew quasi-cyclic codes} of length $n\ell$ and index $\ell$. These codes simultaneously generalize quasi-cyclic, skew cyclic, and skew quasi-cyclic codes. Further exploration of these codes shows that they correspond to the left $P_n$-submodules of the left $P_n$-module $R_n^\ell$. We then utilize a Smith normal form construction to give a classification of nonlinear skew quasi-cyclic codes, namely, Theorem \ref{thm:main_theorem}. This module-theoretic classification leads us to Theorem \ref{thm:gen_dual}, which provides a natural alternative for the standard Euclidean dual of a nonlinear skew quasi-cyclic code.

While the present article gives a definition for a dual code of a given $(\sigma,\ell)$-QC code, we do not know if this is the ``best'' definition of a dual code. The forthcoming article \cite{BHR_2} will continue an exploration of this topic.

\printbibliography

@inproceedings{prcyu,
  title = {On irreducible polynomial remainder codes},
  author = {Yu, J.-H. and Loeliger, H.-A.},
  booktitle = {2011 IEEE International Symposium on Information Theory Proceedings},
  pages = {1190--1194},
  year = {2011},
  organization = {IEEE}
}

@book{1997-lidl-nied,
    AUTHOR = {Lidl, R. and Niederreiter, H.},
     TITLE = {Finite Fields},
    SERIES = {Encyclopedia of Mathematics and its Applications~{ 20}},
   EDITION = {2nd},
 PUBLISHER = {Cambridge University Press, Cambridge},
      YEAR = {1997},
     PAGES = {xiv+755},
      ISBN = {0-521-39231-4},
   MRCLASS = {11Txx},
  MRNUMBER = {1429394},
}

@article{Huffman:2010,
    AUTHOR = {Huffman, W.C.},
     TITLE = {Cyclic {$\mathbb F_q$}-linear {$\mathbb F_{q^t}$}-codes},
   JOURNAL = {Int. J. Inf. Coding Theory},
    VOLUME = {1},
  FJOURNAL = {International Journal of Information and Coding Theory.
              IJICOT},
      YEAR = {2010},
     PAGES = {249--284},
   MRCLASS = {94B15 (81P70)},
  MRNUMBER = {2772898},
MRREVIEWER = {Anuradha\ Sharma},
       DOI = {10.1504/IJICOT.2010.032543},
}

@article {Abualrub-et-al:2010,
    AUTHOR = {Abualrub, T. and Ghrayeb, A. and Aydin, N. and Siap,
              I.},
     TITLE = {On the construction of skew quasi-cyclic codes},
   JOURNAL = {IEEE Trans. Inform. Theory},
  FJOURNAL = {Institute of Electrical and Electronics Engineers.
              Transactions on Information Theory},
    VOLUME = {56},
      YEAR = {2010},
     PAGES = {2081--2090},
      ISSN = {0018-9448,1557-9654},
   MRCLASS = {94B15 (94B60)},
  MRNUMBER = {2723473},
       DOI = {10.1109/TIT.2010.2044062},
}

@article {Abualrub-et-al:2018,
    AUTHOR = {Abualrub, T. and Ezerman, M. and
              Seneviratne, P. and Sol\'{e}, P.},
     TITLE = {Skew generalized quasi-cyclic codes},
   JOURNAL = {TWMS J. Pure Appl. Math.},
  FJOURNAL = {TWMS Journal of Pure and Applied Mathematics},
    VOLUME = {9},
      YEAR = {2018},
     PAGES = {123--134},
      ISSN = {2076-2585,2219-1259},
   MRCLASS = {94B05 (16S36 81P70)},
  MRNUMBER = {3823003},
}

@article {Bhaintwal:2012,
    AUTHOR = {Bhaintwal, M.},
     TITLE = {Skew quasi-cyclic codes over {G}alois rings},
   JOURNAL = {Des. Codes Cryptogr.},
  FJOURNAL = {Designs, Codes and Cryptography. An International Journal},
    VOLUME = {62},
      YEAR = {2012},
     PAGES = {85--101},
      ISSN = {0925-1022,1573-7586},
   MRCLASS = {94B15 (11T71 94B60)},
  MRNUMBER = {2873109},
       DOI = {10.1007/s10623-011-9494-0},
}

@article {Boucher-Geiselmann-Ulmer:2007,
    AUTHOR = {Boucher, D. and Geiselmann, W. and Ulmer, F.},
     TITLE = {Skew-cyclic codes},
   JOURNAL = {Appl. Algebra Engrg. Comm. Comput.},
  FJOURNAL = {Applicable Algebra in Engineering, Communication and
              Computing},
    VOLUME = {18},
      YEAR = {2007},
     PAGES = {379--389},
   MRCLASS = {94B25 (16P10)},
  MRNUMBER = {2322946},
       DOI = {10.1007/s00200-007-0043-z},
}

@article {Boucher-Ulmer:2009,
    AUTHOR = {Boucher, D. and Ulmer, F.},
     TITLE = {Coding with skew polynomial rings},
   JOURNAL = {J. Symbolic Comput.},
  FJOURNAL = {Journal of Symbolic Computation},
    VOLUME = {44},
      YEAR = {2009},
     PAGES = {1644--1656},
      ISSN = {0747-7171,1095-855X},
   MRCLASS = {94B40 (16S35)},
  MRNUMBER = {2553570},
MRREVIEWER = {Manish\ Kumar\ Gupta},
       DOI = {10.1016/j.jsc.2007.11.008}
}

@article {Gao-Shen-Fu:2016,
    AUTHOR = {Gao, J. and Shen, L. and Fu, F.},
     TITLE = {A {C}hinese remainder theorem approach to skew generalized
              quasi-cyclic codes over finite fields},
   JOURNAL = {Cryptogr. Commun.},
  FJOURNAL = {Cryptography and Communications. Discrete Structures, Boolean
              Functions and Sequences},
    VOLUME = {8},
      YEAR = {2016},
     PAGES = {51--66},
      ISSN = {1936-2447,1936-2455},
   MRCLASS = {94B05 (94B15)},
  MRNUMBER = {3480616},
       DOI = {10.1007/s12095-015-0140-y},
}

@article {Lally-Fitzpatrick:2001,
    AUTHOR = {Lally, K. and Fitzpatrick, P.},
     TITLE = {Algebraic structure of quasicyclic codes},
   JOURNAL = {Discrete Appl. Math.},
  FJOURNAL = {Discrete Applied Mathematics. The Journal of Combinatorial
              Algorithms, Informatics and Computational Sciences},
    VOLUME = {111},
      YEAR = {2001},
     PAGES = {157--175},
      ISSN = {0166-218X,1872-6771},
   MRCLASS = {94B05 (13P10)},
  MRNUMBER = {1836725},
MRREVIEWER = {Edgar\ Mart\'{\i}nez-Moro},
       DOI = {10.1016/S0166-218X(00)00350-4},
}

@article {Ling-Sole:2001,
    AUTHOR = {Ling, S. and Sol\'{e}, P.},
     TITLE = {On the algebraic structure of quasi-cyclic codes. {I}.
              {F}inite fields},
   JOURNAL = {IEEE Trans. Inform. Theory},
  FJOURNAL = {Institute of Electrical and Electronics Engineers.
              Transactions on Information Theory},
    VOLUME = {47},
      YEAR = {2001},
     PAGES = {2751--2760},
      ISSN = {0018-9448,1557-9654},
   MRCLASS = {94B15 (94B40)},
  MRNUMBER = {1872837},
       DOI = {10.1109/18.959257},
}

@article {Ling-Sole:2003,
    AUTHOR = {Ling, S. and Sol\'{e}, P.},
     TITLE = {On the algebraic structure of quasi-cyclic codes. {II}.
              {C}hain rings},
   JOURNAL = {Des. Codes Cryptogr.},
  FJOURNAL = {Designs, Codes and Cryptography. An International Journal},
    VOLUME = {30},
      YEAR = {2003},
     PAGES = {113--130},
      ISSN = {0925-1022,1573-7586},
   MRCLASS = {94B05 (11H71 94B15)},
  MRNUMBER = {1998855},
       DOI = {10.1023/A:1024715527805},
}

@article {Ling-Sole:2005,
    AUTHOR = {Ling, S. and Sol\'{e}, P.},
     TITLE = {On the algebraic structure of quasi-cyclic codes. {III}.
              {G}enerator theory},
   JOURNAL = {IEEE Trans. Inform. Theory},
  FJOURNAL = {Institute of Electrical and Electronics Engineers.
              Transactions on Information Theory},
    VOLUME = {51},
      YEAR = {2005},
     PAGES = {2692--2700},
      ISSN = {0018-9448,1557-9654},
   MRCLASS = {94B15 (11H71 94B05)},
  MRNUMBER = {2246389},
MRREVIEWER = {Yuan\ Luo},
       DOI = {10.1109/TIT.2005.850142},
}

@article {Ling-Sole:2006,
    AUTHOR = {Ling, S. and Niederreiter, H. and Sol\'{e}, P.},
     TITLE = {On the algebraic structure of quasi-cyclic codes. {IV}.
              {R}epeated roots},
   JOURNAL = {Des. Codes Cryptogr.},
  FJOURNAL = {Designs, Codes and Cryptography. An International Journal},
    VOLUME = {38},
      YEAR = {2006},
     PAGES = {337--361},
      ISSN = {0925-1022,1573-7586},
   MRCLASS = {94B05 (11T71 94B15)},
  MRNUMBER = {2195520},
MRREVIEWER = {Edgar\ Mart\'{\i}nez-Moro},
       DOI = {10.1007/s10623-005-1431-7},
}

@article {Ou-azzou-et-al:2024,
    AUTHOR = {Ou-azzou, H. and Najmeddine, M. and Aydin, N. and
              Mouloua, E.},
     TITLE = {On the algebraic structure of {$(M,\sigma,\delta)$}-skew
              codes},
   JOURNAL = {J. Algebra},
  FJOURNAL = {Journal of Algebra},
    VOLUME = {637},
      YEAR = {2024},
     PAGES = {156--192},
      ISSN = {0021-8693,1090-266X},
   MRCLASS = {16S36 (94B15)},
  MRNUMBER = {4651847},
       DOI = {10.1016/j.jalgebra.2023.09.014},
}

@incollection {Gluesing-Luerssen:2021,
    AUTHOR = {Gluesing-Luerssen, H.},
     TITLE = {Introduction to skew-polynomial rings and skew-cyclic codes},
 BOOKTITLE = {Concise Encyclopedia of Coding Theory},
     PAGES = {151--180},
 PUBLISHER = {CRC Press, Boca Raton, FL},
      YEAR = {2021},
   MRCLASS = {94B15},
  MRNUMBER = {4598740},
}

@article {Huffman:2013,
    AUTHOR = {Huffman, W.C.},
     TITLE = {On the theory of {$\mathbb F_q$}-linear {$\mathbb F_{q^t}$}-codes},
   JOURNAL = {Adv. Math. Commun.},
  FJOURNAL = {Advances in Mathematics of Communications},
    VOLUME = {7},
      YEAR = {2013},
     PAGES = {349--378},
   MRCLASS = {94B05 (81P70 94B27 94B60)},
  MRNUMBER = {3092314},
MRREVIEWER = {Rumen\ N.\ Daskalov},
       DOI = {10.3934/amc.2013.7.349},
}

@article {Sharma-Kaur:2017,
    AUTHOR = {Sharma, A. and Kaur, T.},
     TITLE = {On cyclic {$\mathbb F_q$}-linear {$\mathbb F_{q^t}$}-codes},
   JOURNAL = {Int. J. Inf. Coding Theory},
  FJOURNAL = {International Journal of Information and Coding Theory.
              IJICOT},
    VOLUME = {4},
      YEAR = {2017},
     PAGES = {19--46},
      ISSN = {1753-7703,1753-7711},
   MRCLASS = {94B15},
  MRNUMBER = {3602545},
       DOI = {10.1504/IJICOT.2017.081457},
}

@article{BLESSENOHL-KARSTENJOHNSEN:1986,
title = {Eine {V}erschärfung des {S}atzes von der {N}ormalbasis},
journal = {J. Algebra},
volume = {103},
pages = {141-159},
year = {1986},
issn = {0021-8693},
doi = {10.1016/0021-8693(86)90174-2},
author = {D. Blessenohl and K. Johnsen}
}

@article{rs1960,
  title={Polynomial Codes Over Certain Finite Fields},
  author={Reed, I.S. and Solomon, G.},
  journal={J. SIAM},
  volume={8},
  number={2},
  pages={300--304},
  year={1960},
  publisher={Society for Industrial and Applied Mathematics}
}

@book{Berrick:2000,
  title={An Introduction to Rings and Modules},
  author={Berrick, A.J. and Keating, M.E.},
  year={2000},
  publisher={Cambridge University Press},
  address={Cambridge}
}

@book{conciseencyclopedia,
  title={Concise Encyclopedia of Coding Theory},
  author={Huffman, W. C. and Kim, J.-L. and Solé, P.},
  year={2021},
  publisher={CRC Press}
}

@unpublished{BHR_2,
    AUTHOR = {Bossaller, D. and Herden, D. and Ruiz-Bola\~nos, I.},
    TITLE = {The Trace Dual of Nonlinear Skew Cyclic Codes},
    year = {2025},
    note = {Submitted}
}

@book{Macwilliams-Sloane:1977,
  title={The {T}heory of {E}rror-{C}orrecting {C}odes},
  author={MacWilliams, F.J. and Sloane, N.J.A.},
  series={North-Holland Mathematical Library},
  volume={16},
  year={1977},
  publisher={Elsevier}
}

@article{Ore:1933,
  title={Theory of non-commutative polynomials},
  author={Ore, O.},
  journal={Annals of Mathematics},
  volume={34},
  number={3},
  pages={480--508},
  year={1933},
  publisher={JSTOR}
}

@article{Lam-Leroy:1988,
  title={Vandermonde and {W}ronskian matrices over division rings},
  author={Lam, T.-Y. and Leroy, A.},
  journal={J. Algebra},
  volume={119},
  number={2},
  pages={308--336},
  year={1988},
  publisher={Academic Press}
}

@article{Yadav-2024,
title = {Some new classes of additive {MDS} and almost {MDS} codes over finite fields},
journal = {Finite Fields Appl.},
volume = {95},
pages = {102394},
year = {2024},
issn = {1071-5797},
doi = {10.1016/j.ffa.2024.102394},
author = {Yadav, M. and Sharma, A.},
}

@article{Bhunia2025,
  title={Foundations of additive codes over finite fields},
  author={Bhunia, D.K. and Dougherty, S.T. and Fern{\'a}ndez-C{\'o}rdoba, C. and Villanueva, M.},
  journal={Finite Fields Appl.},
  volume={104},
  pages={Paper No. 102592},
  year={2025},
  publisher={Elsevier}
}

@article{delsarte1971,
  title={Majority logic decodable codes derived from finite inversive planes},
  author={Delsarte, P.},
  journal={Inform. and Control},
  volume={18},
  pages={319--325},
  year={1971},
  publisher={Academic Press}
}

@article{Ball:2023,
  title={On additive {MDS} codes over small fields},
  author={Ball, S. and Gamboa, G. and  Lavrauw, M.},
  journal={Adv. Math. Commun.},
  volume={17},
  pages={828--844},
  year={2023}
}

@article{Ball:2025,
  title={Griesmer type bounds for additive codes over finite fields, integral and fractional {MDS} codes},
  author={Ball, S. and Lavrauw, M. and Popatia, T.},
  journal={Designs, Codes and Cryptography},
  volume={93},
  pages={175--196},
  year={2025}
}

@misc{Martinez:2024,
Author = {Umberto Martínez-Peñas and Rubén Rodríguez-Ballesteros},
Title = {Linear codes in the folded {H}amming distance and the quasi {MDS} property},
Year = {2024},
Eprint = {2406.13355},
}

@book{Jacobson:2009,
  title={Finite-{D}imensional {D}ivision {A}lgebras over {F}ields},
  author={Jacobson, N.},
  year={1996},
  publisher={Springer}
}
\end{document}